



\documentclass{comnet-mod}


\usepackage{cleveref}
\usepackage{makecell}
\usepackage{mathtools}
\usepackage{comment}
\usepackage{subfig}
\usepackage{graphicx}
\usepackage{float}
\usepackage{booktabs}
\usepackage[export]{adjustbox}

\usepackage[colorinlistoftodos,bordercolor=orange,backgroundcolor=orange!20,linecolor=orange,textsize=scriptsize]{todonotes}

\title{Epidemic Thresholds of Infectious Diseases on Tie-Decay Networks}

\shorttitle{Epidemic Thresholds of Infectious Diseases on Tie-Decay Networks} 
\shortauthorlist{Q. Chen and M. A. Porter} 

\author{
\name{Qinyi Chen}
\address{Operations Research Center, Massachusetts Institute of Technology, Cambridge, MA 02139, USA}
\address{Department of Mathematics, University of California, Los Angeles, CA 90095, USA}
\email{qinyic@mit.edu}
\name{Mason A. Porter$^*$}
\address{Department of Mathematics, University of California, Los Angeles, CA 90095, USA}
\address{Santa Fe Institute, Santa Fe, New Mexico, 87501, USA}
\email{$^*$Corresponding author: mason@math.ucla.edu}
}

\begin{document}

\maketitle



\begin{abstract}
{In the study of infectious diseases on networks, researchers calculate epidemic thresholds to help forecast whether a disease will eventually infect a large fraction of a population. Because network structure typically changes in time, which fundamentally influences the dynamics of spreading processes on them and in turn affects epidemic thresholds for disease propagation, it is important to examine epidemic thresholds in temporal networks. Most existing studies of epidemic thresholds in temporal networks have focused on models in discrete time, but most real-world networked systems evolve continuously in time. In our work, we encode the continuous time-dependence of networks into the evaluation of the epidemic threshold of a susceptible--infected--susceptible (SIS) process by studying an SIS model on tie-decay networks. We derive the epidemic-threshold condition of this model, and we perform numerical experiments to verify it. We also examine how different factors---the decay coefficients of the tie strengths in a network, the frequency of interactions between nodes, and the sparsity of the underlying social network in which interactions occur---lead to decreases or increases of the critical values of the threshold and hence contribute to facilitating or impeding the spread of a disease. We thereby demonstrate how the features of tie-decay networks alter the outcome of disease spread.}
{Temporal networks, tie-decay networks, epidemic thresholds, network epidemiology}
\end{abstract}



\section{Introduction}
\label{sec:introduction}

Infectious diseases spread over social networks, and there is thus much research on the spread of diseases on networks \cite{Pastor2015,kiss2017,Newman2018}. The simplest type of network is a graph, in which each node represents an entity (e.g., an individual who is prone to infection) and each edge represents a tie (such as a social relationship) between two entities. Disease transmission occurs across edges. Each node has an associated state---such as susceptible, infected, recovered, zombified, or something else---and different states are appropriate for different diseases. Each state is called a ``compartment'', and models of infectious diseases with such compartments are called ``compartmental models'' \cite{Brauer2019}. Common compartmental models of infectious diseases include susceptible--infected--susceptible (SIS) processes, susceptible--infected--recovered (SIR) processes, and susceptible--exposed--infected--recovered (SEIR) processes. By modeling the contact patterns of a set of individuals using a network, one can examine the spread of an infectious disease on a social network. This, in turn, helps improve forecasts of disease spread. For example, researchers have used network models to study the spread of COVID-19 \cite{covid-review,arino2021describing}. Such work has important policy implications \cite{Arenas2020,Herrmann2020}. 

Many studies of the spread of infectious diseases on social networks aim to determine whether a disease will die out or spread to a large fraction of a population. To do this, scholars often try to calculate an \emph{epidemic threshold} to give a condition that characterizes whether a disease eventually leads to a large outbreak in a population \cite{kiss2017,Pastor2015}. The critical value of an epidemic threshold depends on the choice of compartment model, the values of the parameters of the model, and the structure of the network on which a disease is spreading. There are several theoretical approaches for estimating the epidemic threshold of a model of disease spread on a network. For example, one can use a heterogeneous mean-field theory \cite{Pastor2001}, a quenched mean-field theory \cite{Gomez2010, castellano2010}, or a dynamic message-passing theory \cite{Karrer2010}. These three approaches tend to work well for forecasting the outcome of the spread of a disease on a large and sparse network \cite{Wang2016}, and they have been used to study how various factors (e.g., degree--degree correlations \cite{Boguna2003} and clustering \cite{Serrano2006}) can affect an epidemic threshold. 

Early research on epidemic thresholds focused on time-independent contact networks with specific topological structures \cite{Moreno2002, Wang2014}, but real-world contact networks evolve over time because of seasonal changes in human interaction patterns and in response to various situations (such as being sick, policies that ask people to ``shelter in place'' during a pandemic, and so on) \cite{Holme2012_temporal, Holme2015_temporal, Holme2019}. Such temporal changes in network structure can significantly impact the spread of a disease, and an important area of study is the dynamics of disease propagation on temporal networks \cite{Holme2016, Leitch2019, Masuda2013}. Leitch et al. \cite{Leitch2019} recently reviewed research that focuses on estimating epidemic thresholds on models of temporal networks. Existing approaches include neighbor-exchange models \cite{Volz2009}, activity-driven models \cite{Perra2012, Starnini2014}, and the use of a sequence of network snapshots \cite{aditya2010, valdano2015}. Different approaches can lead to the same formulation of an epidemic threshold. For example, Aditya et al. \cite{aditya2010} and Valdano et al. \cite{valdano2015} used different derivations to obtain the same formulation of the epidemic threshold of an SIS process on a temporal network. We discuss and compare their approaches in \Cref{sec:theoretical}. We also note that recent work has examined epidemic thresholds in certain continuous-time temporal networks. For instance, Valdano et al. \cite{valdano2018continuous} extended the approach in \cite{valdano2015} to a continuous-time setting in the special case in which adjacency matrices are ``weakly-commuting'' (specifically, when an adjacency matrix at a particular time commutes with an aggregated adjacency matrix up until that time).

Recently, Ahmad et al. \cite{ahmad2018tiedecay} introduced a type of temporal network that they called a \emph{tie-decay network}. Their approach, which draws on conceptual ideas from sociology \cite{Burt2000} and has some features in common with the model of social-network evolution in \cite{jin2001}, takes into account the fact that social relations experience continuous changes and decay in time. A tie-decay network distinguishes between ``ties'' and ``interactions'': a tie is a measurement of a social relationship between two entities that evolves continuously in time, whereas an interaction is some type of instantaneous contact between two entities. Unlike in most temporal network models, in which time has a discrete nature, a tie-decay network models ties between agents in a continuous manner. {A tie strengthens whenever there is an interaction between two entities, and it weakens between such interactions. For example, perhaps the strength of a tie decays exponentially following an interaction. This modeling assumption also is common in point-process models such as Hawkes processes \cite{laub2015,zipkin2016}.}

Because a tie-decay network is a type of temporal network with distinctive features, it is useful to investigate how standard dynamical systems, such as compartmental models of infectious diseases \cite{Brauer2019}, are affected by the structure of a tie-decay network. Studying a standard model (such as an SIS model of disease spread) on a tie-decay network allows one to examine how tie-decay networks affect dynamical processes that occur on them. In particular, by considering SIS dynamics on tie-decay networks, we seek to gain insights into models of disease spread on tie-decay networks. Many studies in network epidemiology examine the spread of diseases in a so-called ``quenched'' state (in which the spreading process is faster than the evolution of the network on which it spreads) or in a so-called ``annealed'' state (in which a network evolves more rapidly than a spreading process on it) \cite{porter2016dynamical}, but a tie-decay network need not possess such a separation into distinct time scales. On a tie-decay network, the evolution of the network and the spreading process can take place at comparable time scales. The tie strengths of a tie-decay network evolve continuously as a disease spreads, thereby influencing both the final outbreak size and the time at which the disease dies out or leads to a large-scale outbreak. Another way in which a tie-decay network differs from many other types of networks, such as those that arise from activity-driven models or when one just considers a sequence of network snapshots, is that tie strengths are not specified arbitrarily or determined by time-invariant activity rates that are associated with each node. Instead, the tie strengths in a network are governed both by the frequencies of the interactions between nodes and by the decay rates of these strengths. These features of tie-decay networks make them relevant for modeling social relationships, and we are thus motivated to investigate how these features influence the dynamics of disease spread.

In the study of the spread of infectious diseases, it is common to assume that a disease spreads only when two entities interact with each other  \cite{Pastor2015, kiss2017}. A tie-decay network is able to model the spread of a disease both {through direct} ``contact'' from close proximity (i.e., when an interaction takes place) and through indirect means (such as transmission through the air or by touching the same contaminated surface). A decaying tie can perhaps model the decrease in the likelihood of disease transmission following a direct interaction between individuals. For instance, when there is a direct contact between an infected individual and a susceptible one, a disease may not spread immediately from the former to the latter. It is also possible for disease transmission to occur after the susceptible individual touches an item that was exposed previously to the infected individual. Such indirect disease transmissions occur with lower probability as time elapses, and decaying ties between individuals can perhaps capture such situations. Employing tie-decay networks thus allows the possibility of disease spread even when there is no face-to-face interaction between entities. Studying disease dynamics on tie-decay networks can contribute to the understanding of how diseases spread in a real-world social network that evolves continuously in time. Additionally, compartmental models such as SIS processes have also been applied to studying the spread of information or attitudes in a population \cite{Guille2013,volk2020}. It seems potentially suitable to use tie-decay networks in such settings. For instance, an interaction can encode one entity informing another entity about some information, but the receiver of the information does not change their opinion until an ``infection'' event takes place. Additionally, because the strength of the tie between these two entities decays over time, the likelihood that the entity that receives information changes their beliefs from a new interaction between these two entities decreases with the amount of time since their last interaction. The use of tie-decay networks to study the spread of information or opinions allows one to differentiate an instantaneous transmission of information (through an interaction) from the long-lasting influence of information that was received in the past (as encoded in tie strength).

In the present paper, we study the dynamics of disease spread on a tie-decay network by examining the epidemic threshold of an SIS process. We first discuss the modeling choices that we make to associate the tie strengths with the spreading rates. Our mathematical formulation allows us to derive the epidemic threshold of an SIS process on a tie-decay network by extending the derivation of the epidemic threshold on other types of temporal networks. In our study, we build on methods that were designed for a sequence of network snapshots \cite{aditya2010, valdano2015}.\footnote{There has been some work on deriving epidemic thresholds on networks that evolve in continuous time (see, e.g.,  \cite{valdano2018continuous}), but their formulations make assumptions on network structure that do not apply to our tie-decay networks.} We then evaluate our theoretical expression for the epidemic threshold using numerical experiments on both synthetic and real-world networks, and we explore the impact of the network parameters on the spreading process. 

Our paper proceeds as follows. In \Cref{sec:model}, we mathematically formalize an SIS process on a tie-decay network. In \Cref{sec:theoretical}, we derive the epidemic threshold of an SIS process on a tie-decay network using two different methods. One is based on a nonlinear dynamical system, and the other is based on a tensor representation. In \Cref{sec:numerics}, we construct tie-decay networks from both synthetic and real-world data, and we simulate SIS processes on them. The results of the numerical experiments validate our theoretical expression for the epidemic threshold, and they also illustrate the influence of the network parameters on the spreading dynamics. In \Cref{sec:conclusion}, we conclude and propose future research directions.


\section{Model Setup}
\label{sec:model}

We first construct a tie-decay network using the formulation from \cite{ahmad2018tiedecay}. Let $\textbf{B}(t)$ be an $N \times N$ matrix of the tie strengths between each pair of the $N$ nodes in a network. The entry $b_{ij}(t)$ of $\textbf{B}(t)$ encodes the tie strength between nodes $i$ and $j$ at time $t$. Following an interaction between nodes $i$ and $j$, the strength of the tie between them decays exponentially according to the differential equation $b'_{ij}(t) = -\alpha b_{ij}(t)$, where $\alpha > 0$ is the decay coefficient. If nodes $i$ and $j$ interact at time $t$, then the strength of the tie between them increments by $1$ at time $t$. Therefore, if nodes $i$ and $j$ interact with each other at times $t = t_1, t_2, \ldots $, the tie strength between them satisfies
\begin{equation}
\label{eqn:continuous_evolve}
    b_{ij}(t) = b_{ij}(0)e^{-\alpha t} + \sum_{k: t_k < t}H(t-t_k){e^{-\alpha(t-t_k)}}\,,
\end{equation}
where $H(t)$ is the Heaviside step function. The following ordinary differential equation (ODE) describes the dynamics of the tie strengths:
\begin{equation}
    b'_{ij}(t) = -\alpha b_{ij} + \sum_{\{k: t_k < t\}}\delta(t-t_k){e^{-\alpha(t-t_k)}}\,.
\end{equation}
The interactions between nodes $i$ and $j$ are undirected in nature, so $b_{ij}(t) = b_{ji}(t)$ for all times $t$. {We do not consider self-interactions, so $b_{ii}(t) = 0$ for any node $i$ and any time $t$.} 

In practice, to model and analyze the spread of an infectious disease on a tie-decay network, we discretize time with a small time step of length $\Delta t$. Ahmad et al. \cite{ahmad2018tiedecay} chose a value of $\Delta t$ that is sufficiently small such that there is at most one interaction between agents. With this choice, one can convert a tie-decay network into a discrete set of temporal networks with adjacency matrices $\textbf{B}^{(\tau)} = \textbf{B}(\tau \Delta t)$, where $\tau = 0, 1, 2, \ldots$ indicates the time step. At each of these time steps, we suppose that the disease spreads, such that {any change} in network structure directly impacts the spreading properties at the $\tau$th time step. Although we discretize our tie-decay networks, we treat the underlying time as continuous. Additionally, we have the following relationship between a temporal snapshot and its predecessor:
\begin{equation}
	\textbf{B}^{(\tau)} = e^{-\alpha \Delta t}\textbf{B}^{(\tau-1)} + \textbf{A}^{(\tau)}\,,
\end{equation}
where $\textbf{A}^{(\tau)}$ is an indicator matrix in which either $2$ or $0$ entries are nonzero (because we are considering undirected networks). A pair of nonzero entries represents the single interaction that takes place during the $\tau$th time step. {Although Ahmad et al. \cite{ahmad2018tiedecay} required $\Delta t$ to be small enough such that there is at most one interaction in one time step, in practice, one can relax this requirement as long as the number of interactions in any time step $((\tau-1) \Delta t, \tau \Delta t]$ is not too large. 
In this way, we avoid binning interactions {into intervals of a fixed length. In \Cref{subsec:comparison_traditional}, we compare our results on a tie-decay network versus results that we obtain using a traditional temporal network in which we bin interactions into adjacent windows of a fixed length.} If there is an interaction between nodes $i$ and $j$ at time $t'$ that satisfies $(\tau-1) \Delta t < t' \leq \tau\Delta t$, we set $\textbf{A}^{(\tau)}_{ij} = \textbf{A}^{(\tau)}_{ji} = 1$ and set all other entries of $\textbf{A}^{(\tau)}$ to $0$. This interaction thus changes the network structure and influences the spreading behavior of a disease during the $\tau$th time step. In \Cref{subsec:real-world}, we discuss how we choose $\Delta t$ for networks that we construct from empirical data.} Because we discretize our tie-decay networks using a small time step, we obtain a number of temporal snapshots that tends to be much larger than the number of temporal snapshots that are often studied in practice in discrete-time temporal networks. In \Cref{subsec:choice_of_period}, we show that if we discretize a tie-decay network into $T$ temporal snapshots, it is possible to estimate the epidemic threshold using only the first $T_0 \ll T$ snapshots. This enables us to use a reasonable amount of computational time for studying disease dynamics on tie-decay networks.  

As a case study, we consider a susceptible--infected--susceptible (SIS) model \cite{kiss2017,Brauer2019} (one of the most common types of compartmental models), where the nodes can be either in a susceptible state or in an infected state (i.e., a ``compartment'' in the language of mathematical epidemiology). At each time step, a susceptible node can be infected by each of its infected neighbors with independent probability $\lambda$, and each infected node can recover from the disease and become susceptible again with independent probability $\mu$. We make a slight modification to the definition of a traditional SIS model to incorporate the traits of a tie-decay network. Suppose that an SIS process occurs on a tie-decay network with a tie-strength matrix $\textbf{B}(t)$ with entries $b_{ij}(t)$. We also assume that $\lambda_{\mathrm{max}}$ is the maximum possible infection probability and that the probability that an infected node $i$ infects a susceptible node $j$ at time $t$ is $\lambda_{\mathrm{max}}\mathrm{min}\{b_{ij}(t), 1\}$. That is, for nodes $i$ and $j$ with a tie strength $b_{ij}(t)$ that exceeds $1$, the infection probability is $\lambda_{\mathrm{max}}$. If the tie strength between them is {less than or equal to} $1$, then the infection probability is {$\lambda_{\mathrm{max}}b_{ij}(t)$}. When $b_{ij}(t) = 0$, there is no tie between nodes $i$ and $j$, so no infection event can take place between them. Therefore, the infection probabilities, which are different for different pairs of nodes, in a tie-decay network depend on how the network evolves in time. At each time $t$, an infected node recovers with probability $\mu$, and it is then in the susceptible state again at time $t+1$.

When modeling an SIS process on a tie-decay network, we first determine the duration of the time step $\Delta t$ in our discretization, and we discretize the network into a total of $T$ temporal snapshots. At each time step, we update the tie strengths $\textbf{B}^{(\tau)}$ by letting all ties decay exponentially and incrementing the {ties} for which an interaction takes place. For each infected node, we then see if there are any infection or recovery events. After the $T$th time step, the dynamics stop and we examine the final outbreak size of the epidemic. In \Cref{tbl:notations}, we summarize the main notation in our paper.

\begin{table}[htbp]
\begin{center}
\begin{tabular}{c | l p{8.5cm}}
Notation && Description \\\hline
$\textbf{B}(t)$ && the tie-strength matrix of a tie-decay network at time $t$ \\ \hline
$b_{ij}(t)$ && the (undirected) tie strength between entities (i.e., nodes) $i$ and $j$ at time $t$ \\ \hline
$\textbf{B}^{(\tau)}$ && $\textbf{B}^{(\tau)} = \textbf{B}(\tau \Delta t)$, the tie-strength matrix of a tie-decay network at the $\tau$th time step  after we discretize the network \\ \hline
$b_{ij}^{(\tau)}$ && the tie strength between entities $i$ and $j$ at the $\tau$th time step \\ \hline
$\textbf{A}^{(\tau)}$ && a symmetric matrix whose nonzero entries (there are either $0$ or $2$ of them) indicates what interaction takes place at the $\tau$th time step \\ \hline
$p_i^{(\tau)}$ && the probability that node $i$ is infected in the $\tau$th time step \\ \hline
$\lambda_{ij}^{(\tau)}$ && the infection probability between nodes $i$ and $j$ in the $\tau$th time step \\ \hline
$\lambda_{\mathrm{max}}$ && the maximum infection probability in the SIS process \\ \hline
${\Gamma(i)}$ && the set of neighbors of node $i$ \\ \hline
$\mu$ && the recovery probability in the SIS process \\ \hline
$N$ && the total number of nodes in a tie-decay network \\ \hline
{$p$} && the edge-creation probability of an Erd\H{o}s--R\'{e}nyi network \\ \hline
$\alpha$ && the decay coefficient of tie strengths in a tie-decay network \\ \hline
$\Delta t$ && the duration of one time step \\ \hline
$T$ && the total number of temporal snapshots after we discretize a tie-decay network \\ \hline
$l$ && the minimal length of the period for which the periodic boundary condition $\textbf{B}^{(\tau)} = \textbf{B}^{(\tau+l)}$ is satisfied \\ \hline
{$\textbf{S}$} && the system matrix of an SIS process on a tie-decay network \\ \hline
{$\rho_{\text{cr}}(\textbf{S})$} && the spectral radius of the matrix \textbf{S} (if $\textbf{S}$ is the system matrix of the SIS process, then $\rho_{\text{cr}}(\textbf{S})$ is the \emph{critical value} of the epidemic-threshold condition) \\ \hline
{$\mathcal{G}(N, p)$} && the ensemble of Erd\H{o}s--R\'{e}nyi networks with $N$ nodes and edge-creation probability $p$ \\ \hline
{$G(N, p)$} && an instance of an Erd\H{o}s--R\'{e}nyi network with $N$ nodes and edge-creation probability $p$ \\ \hline
\end{tabular}
\end{center}
\caption{The main notation in our paper.}
\label{tbl:notations} 
\end{table}


\section{Derivation of the Epidemic Threshold}
\label{sec:theoretical}

We now derive the epidemic threshold of an SIS process on a tie-decay network. The way that we perform time discretization allows us to extend methods that were designed for deriving epidemic thresholds in discrete temporal-network models to tie-decay networks. 

We derive the same epidemic threshold using two different methods. The first method is based on a derivation in Aditya et al.~\cite{aditya2010}, who modeled an SIS process as a nonlinear dynamical system and derived the epidemic threshold using linear stability analysis. The second method that we use was employed by Valdano et al.~\cite{valdano2015}, who modeled disease transmission using a multilayer representation \cite{kivela2014, DeDomenico2013} of a temporal network and an associated adjacency tensor. We discuss both methods to demonstrate two distinct approaches for deriving an epidemic threshold. We thereby illustrate that any method that one can apply to an arbitrary sequence of network snapshots is suitable for tie-decay networks because such methods utilize the temporal changes of a network. Both approaches rely on the use of a periodic boundary condition in time to derive an epidemic threshold. This is essential to guarantee stability of the disease-free state, and we thereby also use such boundary conditions in our derivations.


\subsection{Derivation using a Nonlinear Dynamical System}
\label{subsec:NLDS}

We derive the epidemic threshold of an SIS model on a tie-decay network by extending the approach in Aditya et al. \cite{aditya2010}, who modeled an SIS process using a nonlinear dynamical system. As we discussed in \Cref{sec:model}, we discretize a tie-decay network such that each time step is of length $\Delta t$. We thereby convert a tie-decay network into a discrete set of temporal networks with tie-strength matrices ${\textbf{B}^{(\tau)}}$, where $\tau \in \{1, 2, \ldots, T\}$. At each of the time steps, an infection of a susceptible node $j$ by an infected node $i$ occurs with probability $\lambda_{ij}^{(\tau)}=\lambda_{\mathrm{max}}\mathrm{min}\{b_{ij}^{(\tau)}, 1\}$ and each infected node recovers with independent probability $\mu$. Let $\xi_\tau(i)$ denote the probability that node $i$ does not become infected in the $\tau$th time step, and let $p_i^{(\tau)}$ denote the probability that node $i$ is in the infected state at time $\tau$. We use $\Gamma(i)$ to denote the set of neighbors of node $i$, and we assume that the states of the nodes in $\Gamma(i)$ are uncorrelated with each other. The following relationship holds:
\begin{equation}
\begin{aligned}
    \xi_\tau(i) 
    & = \prod_{j \in \Gamma(i)} \left(p_j^{(\tau)}(1-\lambda_{ij}^{(\tau)})+1-p_j^{(\tau)}\right)
    \\ & = \prod_{j \in \{1, \ldots, N \}} \left(1-\lambda_{ij}^{(\tau)}p_j^{(\tau)}\right)
    \\ & = \prod_{j \in \{1, \ldots, N \}} \left(1-\lambda_{\mathrm{max}}\mathrm{min}\{b_{ij}^{(\tau)}, 1\}p_j^{(\tau)}\right)\,.
\end{aligned}
\end{equation}
Additionally,
\begin{equation}
    1 - p_i^{(\tau+1)} = \mu p_i^{(\tau)} + (1-p_i^{(\tau)})\xi_\tau(i)\,,
\end{equation}
which implies that
\begin{equation}
    p_i^{(\tau+1)} = 1-\mu p_i^{(\tau)} - (1-p_i^{(\tau)})\prod_{j \in \{1, \ldots, N \} } \left(1-\lambda_{\mathrm{max}}\mathrm{min}\{b_{ij}^{(\tau)}, 1\}p_j^{(\tau)}\right)\,.
\label{eqn:relation}
\end{equation}
We let $\textbf{p}_\tau = (p_1^{(\tau)}, p_2^{(\tau)}, \ldots, p_N^{(\tau)})^\mathrm{T} $ and write \eqref{eqn:relation} in the form
\begin{equation}
    \textbf{p}_{\tau+1} = g_\tau(\textbf{p}_\tau)\,,
    \label{eqn:nlds}
\end{equation}
where $g_\tau$ is a function that depends on $\textbf{B}^{(\tau)}$. The nonlinear dynamical system \eqref{eqn:nlds} describes the dynamics of disease spread on a tie-decay network. To determine whether a disease dies out or leads to an outbreak, we assume that we have \emph{boundary periodic conditions} in time (see \cite{aditya2010}). That is, after we discretize a tie-decay network, we assume that $\textbf{B}^{(\tau)} = \textbf{B}^{(\tau+l)}$ for some constant $l$, {which allows us to examine the asymptotic stability of the system by looking at just one period}. Although a periodic boundary condition in time are not something that one expects to observe in temporal networks that one constructs from empirical data, we choose $l$ to be arbitrarily large so that we do not need to lose generality, because we can set $l = T$ (where we recall that $T$ is the final time). Additionally, in \Cref{subsec:choice_of_period}, we demonstrate that even with this periodic boundary condition, we can approximately characterize SIS spreading dynamics on a tie-decay network using a shorter period $l$. 

Given the discrete dynamical system \eqref{eqn:nlds}, we recall the following theorem \cite{hirsch2013}.
\begin{theorem}
The system $\textbf{x}_{\tau+1} = g(\textbf{x}_{\tau})$ is {asymptotically }stable at the fixed point $\textbf{x}^\ast$ if the magnitude of the dominant eigenvalue of the Jacobian $J = \nabla g(\textbf{x}^\ast)$ is less than $1$.
\label{thm:stability}
\end{theorem}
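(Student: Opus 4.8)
The plan is to reduce the statement to a contraction estimate in a suitably chosen norm, which is the standard route for this kind of discrete-time linearization result. First I would translate the fixed point to the origin by setting $\textbf{y}_\tau = \textbf{x}_\tau - \textbf{x}^\ast$, so that the recursion becomes $\textbf{y}_{\tau+1} = J\textbf{y}_\tau + R(\textbf{y}_\tau)$ with $J = \nabla g(\textbf{x}^\ast)$ and a remainder $R$ satisfying $\|R(\textbf{y})\|/\|\textbf{y}\| \to 0$ as $\textbf{y} \to \textbf{0}$. This Taylor estimate holds because $g$ is differentiable at $\textbf{x}^\ast$; in our setting the map $g_\tau$ in \eqref{eqn:nlds} is a polynomial in the components of $\textbf{p}_\tau$, so the required smoothness is automatic. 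Let $\gamma < 1$ denote the magnitude of the dominant eigenvalue of $J$ (that is, its spectral radius), and fix $\varepsilon > 0$ small enough that $\gamma + 2\varepsilon < 1$.

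The key step is to pass from the Euclidean norm to an adapted (Lyapunov) norm $\|\cdot\|_\ast$ whose induced operator norm satisfies $\|J\|_\ast \le \gamma + \varepsilon$. Such a norm exists: one can obtain it by putting $J$ into Jordan (or Schur) form and rescaling the off-diagonal entries, or directly via Gelfand's formula $\gamma = \lim_{k\to\infty}\|J^k\|^{1/k}$ by setting $\|\textbf{v}\|_\ast = \sum_{k=0}^{M}\|J^k\textbf{v}\|/(\gamma+\varepsilon)^k$ for a sufficiently large integer $M$. I expect this to be the main obstacle to a fully self-contained proof, since in the Euclidean norm the bound $\|J\textbf{y}\| \le \|J\|\,\|\textbf{y}\|$ is useless when $J$ is non-normal and $\|J\| > 1$ even though $\gamma < 1$; the adapted norm is precisely what makes the linear part strictly contracting.

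With the adapted norm fixed, I would choose $\delta > 0$ small enough (using the remainder estimate together with the equivalence of all norms on a finite-dimensional vector space) that $\|R(\textbf{y})\|_\ast \le \varepsilon\|\textbf{y}\|_\ast$ whenever $\|\textbf{y}\|_\ast \le \delta$. Then, for any $\tau$ with $\|\textbf{y}_\tau\|_\ast \le \delta$,
\begin{equation}
  \|\textbf{y}_{\tau+1}\|_\ast \le \|J\|_\ast\,\|\textbf{y}_\tau\|_\ast + \|R(\textbf{y}_\tau)\|_\ast \le (\gamma + 2\varepsilon)\|\textbf{y}_\tau\|_\ast\,.
\end{equation}
Writing $q = \gamma + 2\varepsilon \in (0,1)$, this gives $\|\textbf{y}_{\tau+1}\|_\ast \le q\|\textbf{y}_\tau\|_\ast \le q\delta < \delta$, so by induction $\|\textbf{y}_0\|_\ast \le \delta$ forces $\|\textbf{y}_\tau\|_\ast \le q^\tau\|\textbf{y}_0\|_\ast$ for all $\tau$. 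This simultaneously yields Lyapunov stability (the orbit never leaves the $\delta$-ball) and attraction ($\textbf{y}_\tau \to \textbf{0}$), hence asymptotic stability of $\textbf{x}^\ast$; equivalence of norms then transfers the conclusion back to the Euclidean norm. Finally, I would remark that this is exactly the textbook statement cited with the theorem, and that in \Cref{sec:theoretical} we apply it at the disease-free fixed point $\textbf{p}_\tau \equiv \textbf{0}$, where $J$ becomes the product of the per-step Jacobians over one period of the boundary condition and the hypothesis $\gamma < 1$ is precisely the epidemic-threshold inequality.
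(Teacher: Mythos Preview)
Your argument is correct and is the standard textbook route: shift the fixed point to the origin, pick an adapted (Lyapunov) norm in which the linear part is a strict contraction, absorb the Taylor remainder into the contraction constant, and iterate. Nothing is missing; the only point that deserves a careful sentence is the existence of the adapted norm, and you handle that.

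However, there is nothing to compare against: the paper does not prove \Cref{thm:stability}. It is stated as a recalled fact with a citation to the dynamical-systems textbook \cite{hirsch2013}, and the paper immediately invokes it to obtain \Cref{lemma:nlds}. So your proposal goes strictly beyond what the paper does. If you want to align with the paper's treatment, you can simply cite the result; if you prefer to include a proof, what you have written is exactly the argument one finds in the cited reference and is appropriate here.
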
 

Using this result gives the following lemma.
\begin{lemma}
Let $\textbf{S} = \prod_{\tau=0}^{l-1}{\textbf{S}_\tau}$, where $\textbf{S}_\tau = (1-\mu)\textbf{I} + \lambda_\mathrm{max}\mathrm{min}\{\textbf{B}^{(\tau)},1\}$. If the magnitude of the dominant eigenvalue of $\textbf{S}$ is less than $1$, then $\textbf{p}_\tau$ is asymptotically stable at $0$.
\label{lemma:nlds}
\end{lemma}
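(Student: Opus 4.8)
The plan is to apply \Cref{thm:stability} not to a single step of \eqref{eqn:nlds}, but to the period-$l$ composition map. Because of the periodic boundary condition $\textbf{B}^{(\tau)}=\textbf{B}^{(\tau+l)}$, the sequence of maps $(g_\tau)$ is $l$-periodic, so it is natural to introduce $G := g_{l-1}\circ g_{l-2}\circ\cdots\circ g_0$, which advances the system by one full period, and to study the iteration $\textbf{x}_{n+1}=G(\textbf{x}_n)$. First I would verify that $\textbf{p}^\ast=\mathbf{0}$ (the disease-free state) is a fixed point: substituting $\textbf{p}_\tau=\mathbf{0}$ into \eqref{eqn:relation} gives $p_i^{(\tau+1)}=1-0-(1)\prod_j(1-0)=0$, so $g_\tau(\mathbf{0})=\mathbf{0}$ for every $\tau$ and hence $G(\mathbf{0})=\mathbf{0}$.

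Next I would compute the Jacobian of $g_\tau$ at $\mathbf{0}$. Writing $\lambda_{ij}^{(\tau)}=\lambda_{\max}\min\{b_{ij}^{(\tau)},1\}$ and noting that this quantity does \emph{not} depend on $\textbf{p}_\tau$, I would expand the product in \eqref{eqn:relation} to first order in $\textbf{p}_\tau$ about $\mathbf{0}$. Since the zeroth-order term of the product is $1$ and the term $p_i^{(\tau)}\sum_j\lambda_{ij}^{(\tau)}p_j^{(\tau)}$ is quadratic, one obtains $p_i^{(\tau+1)} = (1-\mu)p_i^{(\tau)} + \sum_{j}\lambda_{ij}^{(\tau)}p_j^{(\tau)} + O(\|\textbf{p}_\tau\|^2)$, so that $\nabla g_\tau(\mathbf{0}) = (1-\mu)\textbf{I} + \lambda_{\max}\min\{\textbf{B}^{(\tau)},1\} = \textbf{S}_\tau$, where $\min\{\textbf{B}^{(\tau)},1\}$ is taken entrywise. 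The kink of $\min\{b_{ij}^{(\tau)},1\}$ at $b_{ij}^{(\tau)}=1$ causes no difficulty here because we differentiate only with respect to $\textbf{p}_\tau$, in which $\min\{b_{ij}^{(\tau)},1\}$ is a constant, so $g_\tau$ is smooth in its argument.

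By the chain rule, $\nabla G(\mathbf{0}) = \nabla g_{l-1}(\mathbf{0})\,\nabla g_{l-2}(\mathbf{0})\cdots\nabla g_0(\mathbf{0}) = \textbf{S}_{l-1}\textbf{S}_{l-2}\cdots\textbf{S}_0$. I would then observe that the magnitude of the dominant eigenvalue (the spectral radius) of such a product does not depend on the order of the factors: it is invariant under cyclic permutations of the factors because $\rho(MN)=\rho(NM)$, and, moreover, each $\textbf{S}_\tau$ is symmetric (since $\textbf{B}^{(\tau)}$ is symmetric), so $(\textbf{S}_0\cdots\textbf{S}_{l-1})^{\mathrm T}=\textbf{S}_{l-1}\cdots\textbf{S}_0$ together with $\rho(M)=\rho(M^{\mathrm T})$ handles reversal as well. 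Hence the hypothesis that the dominant eigenvalue of $\textbf{S}=\prod_{\tau=0}^{l-1}\textbf{S}_\tau$ has magnitude less than $1$ is precisely the statement that the dominant eigenvalue of $\nabla G(\mathbf{0})$ has magnitude less than $1$, and \Cref{thm:stability} applied to $\textbf{x}_{n+1}=G(\textbf{x}_n)$ gives that $\mathbf{0}$ is asymptotically stable for the period map; in particular $\textbf{p}_{nl}\to\mathbf{0}$ as $n\to\infty$ for $\textbf{p}_0$ sufficiently close to $\mathbf{0}$.

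Finally I would upgrade stability of the period map to stability of the full sequence $\textbf{p}_\tau$. For each residue $r$ with $0\le r<l$, we have $\textbf{p}_{nl+r} = (g_{r-1}\circ\cdots\circ g_0)(\textbf{p}_{nl})$, and each $g_\tau$ is continuous with $g_\tau(\mathbf{0})=\mathbf{0}$; hence $\textbf{p}_{nl+r}\to(g_{r-1}\circ\cdots\circ g_0)(\mathbf{0})=\mathbf{0}$ as $n\to\infty$. Since there are only finitely many residues $r$, the whole sequence satisfies $\textbf{p}_\tau\to\mathbf{0}$, and the same continuity argument, applied uniformly over one period, yields Lyapunov stability, so $\mathbf{0}$ is asymptotically stable. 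I expect the only genuinely delicate point to be the bookkeeping in the first-order expansion of the product in \eqref{eqn:relation}---confirming that no linear contribution is dropped and that the $\min$ introduces no smoothness problem; the composition/cyclic-invariance step and the passage from the period map to the full sequence are routine.
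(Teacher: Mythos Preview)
Your proposal is correct and follows essentially the same approach as the paper: pass to the period-$l$ composition $G=g_{l-1}\circ\cdots\circ g_0$, compute its Jacobian at $\mathbf{0}$ via the chain rule to obtain the product $\textbf{S}_{l-1}\cdots\textbf{S}_0$, and invoke \Cref{thm:stability}. You are in fact more careful than the paper in two places---explicitly linearizing \eqref{eqn:relation} to identify $\nabla g_\tau(\mathbf{0})=\textbf{S}_\tau$, and noting that the reversed product has the same spectral radius as $\textbf{S}$ via symmetry and transposition---while the paper handles the extension to the intermediate residues $\textbf{p}_{l\tau+r}$ by invoking cyclic invariance of the dominant eigenvalue rather than your continuity argument.
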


\begin{proof}
Because of the periodic boundary condition in time, it follows that $\textbf{S}_\tau = \textbf{S}_{\tau+l}$. The choice of $\tau$ is arbitrary, so it suffices to show that $\textbf{p}_{l\tau}$ is asymptotically stable at time $0$. Consider $\textbf{p}_{l(
\tau+1)} = g_{l-1}(g_{l-2}(\cdots(g_1(g_0(\textbf{p}_{l\tau})))\cdots))$. We have
\begin{equation}
\begin{aligned}
    \frac{\partial \textbf{p}_{l(\tau+1)}}{\partial \textbf{p}_{l\tau}} 
    & =  \left.\left(\frac{\partial \textbf{p}_{l(\tau+1)}}{\partial \textbf{p}_{l\tau+l-1}}\times \cdots \times \frac{\partial \textbf{p}_{l\tau+1}}{\partial \textbf{p}_{l\tau}}\right)\right|_{\textbf{p}_{l\tau}=0}
    \\ & = \left(\left.\frac{\partial \textbf{p}_{l(\tau+1)}}{\partial \textbf{p}_{l\tau+l-1}}\right|_{\textbf{p}_{l\tau+l-1}=0}\right) \times \cdots \times \left(\left.\frac{\partial \textbf{p}_{l\tau+1}}{\partial \textbf{p}_{l\tau}}\right|_{\textbf{p}_{l\tau}=0}\right)
    \\ & = \textbf{S}_{l-1} \times \cdots \times \textbf{S}_0
    \\ & = \textbf{S}\,.
\end{aligned}
\end{equation}
Consequently, by \Cref{thm:stability}, if the magnitude of the dominant eigenvalue of $\textbf{S}$ is less than $1$, it follows that $\textbf{p}_{l\tau}$ is asymptotically stable at $0$. We also obtain asymptotic stability of $\textbf{p}_{l\tau+1}, \ldots, \textbf{p}_{l\tau+l-1}$ because the dominant eigenvalue of the product of invertible matrices is invariant under a cyclic permutation.
\end{proof}

\medskip

If the magnitude of the dominant eigenvalue of each matrix $\textbf{S}_\tau$ is less than $1$, then the magnitude of the dominant eigenvalue of $\textbf{S}$ is also less than $1$ and the system is asymptotically stable at $0$. However, this is a much more conservative condition than our epidemic-threshold condition:
\begin{equation}
	\rho_{\text{cr}}(\textbf{S}) = \rho_{\text{cr}}\left(\prod_{\tau=0}^{l-1}{\textbf{S}_\tau}\right)=1\,,
\label{eqn:epidemic_threshold}
\end{equation}
where $\rho_{\text{cr}}(\mathbf{\Theta})$ is the spectral radius of the matrix $\mathbf{\Theta}$. Even when the dominant eigenvalues of some of the $\textbf{S}_\tau$ matrices have magnitudes that are larger than $1$, the disease can still die out asymptotically, depending on the spectrum of $\textbf{S}$. We refer to $\textbf{S}$ as the \emph{system matrix} of our SIS process on a tie-decay network, and we refer to $\rho_{\text{cr}}(\textbf{S})$ as the \emph{critical value} of the epidemic-threshold condition (and hence of the epidemic threshold).


\subsection{Derivation using a Multilayer Representation}

We now derive the epidemic threshold of an SIS process on a tie-decay network by extending the derivation in \cite{valdano2015} that is based on a multilayer representation of a temporal network. We again consider our SIS model on a tie-decay network with tie-strength matrix $\textbf{B}(t)$, which we discretize into snapshots ${\textbf{B}^{(\tau)}}$, where $\tau \in \{1, 2, \ldots, T\}$. As in \Cref{subsec:NLDS}, we use the periodic boundary condition $\textbf{B}^{(\tau)} = \textbf{B}^{(\tau+l)}$ with period $l$ and we seek the asymptotic solution for one period. We define a tensor $\widetilde{\textbf{M}}$ with components 
\begin{equation}
    \widetilde{\textbf{M}}^{\tau \tau'}_{ij} = \delta_{\tau,\tau'-1}[(1-\mu)\delta_{ij}+\lambda_\mathrm{max} \mathrm{min}\{b_{ij}^{(\tau)},1\}]
\end{equation}
to reflect the dynamics of \eqref{eqn:relation}. This tensor encodes information about the probability that node $i$ is infected by node $j$ when we advance from time step $\tau$ to time step $\tau+1$. One can also represent $\widetilde{\textbf{M}}$ using a supra-adjacency matrix $\textbf{M} \in \mathbb{R}^{Nl \times Nl}$ (see \cite{DeDomenico2013, Diaz2013}), where $N$ is the total number of nodes. Let $\textbf{p}(k)$ be the state vector of the $k$th time period, which covers time steps in the interval $[kl, (k+1)l]$. Let $\alpha = N\tau + i$, and let $\textbf{p}_\alpha(k)$ denote the probability that node $i$ is in the infected state in time step $kl+\tau$. Using this notation, we write \eqref{eqn:relation} as
\begin{equation}
    \textbf{p}_\alpha(k) = 1-\prod_\beta[1-\textbf{M}_{\beta\alpha}\textbf{p}_\beta(k-1)]\,.
\end{equation}
The asymptotic solution of {the state vector $\hat{\textbf{p}}$ for} one period  is 
\begin{equation}
    \hat{\textbf{p}}_\alpha = 1-\prod_\beta[1-\textbf{M}_{\beta\alpha}\hat{\textbf{p}}_\beta]\,.
\end{equation}
By the analysis in \cite{valdano2015}, the epidemic-threshold condition is
\begin{equation}
    \rho_{\text{cr}}(\textbf{M}) = \rho_{\text{cr}}(\textbf{S})^{1/l} = 1\,,
\label{eqn:critical_value}
\end{equation}
where $\textbf{S} = \prod_{\tau=1}^l[(1-\mu)\textbf{I}+\lambda_\mathrm{max} \mathrm{min}\{\textbf{B}^{(l-\tau)},1\}]$, which matches the formulation in \Cref{lemma:nlds}. This again yields the epidemic-threshold condition \eqref{eqn:epidemic_threshold}.


\section{Numerical Experiments}
\label{sec:numerics}

{We now conduct numerical experiments} in which we simulate an SIS process on various tie-decay networks. We perform our computations on a workstation using code that we wrote in {\sc Python}\footnote{Our code is publicly available on GitHub at {\tt https://github.com/qinyichen/tie-decay-epidemic-threshold}.}. In \Cref{subsec:validation}, we validate the epidemic-threshold condition \eqref{eqn:epidemic_threshold} by comparing our theoretical results with our numerical simulations. In \Cref{subsec:influence_factors}, we construct tie-decay networks with different decay coefficients, temporal interactions, and sparsities\footnote{{We use Erd\H{o}s--R\'{e}nyi networks in our experiments. We say that an Erd\H{o}s--R\'{e}nyi network is ``sparser'' than another Erd\H{o}s--R\'{e}nyi network if the edge-creation probability $p$ is smaller in that network than in the other network.}} to explore how these factors influence the outcome of disease spread. In \Cref{subsec:choice_of_period}, we discuss the periodic boundary condition $\textbf{B}^{(\tau)} = \textbf{B}^{(\tau+l)}$ that we stated in \Cref{sec:theoretical} and check numerically how the choice of period $l$ affects the epidemic threshold. In \Cref{subsec:real-world}, we explore disease dynamics on tie-decay networks that we construct from empirical data. Finally, in \Cref{subsec:comparison_traditional}, we compare the results on a tie-decay network that we construct from an Erd\H{o}s--R\'{e}nyi (ER) network with those on a traditional temporal network that we construct from binning interactions of the same ER network. This further motivates the use of tie-decay networks as a viable modeling framework for studying dynamical processes on temporal networks. 

To simplify our notation, we use $\lambda$ to denote the maximum infection probability $\lambda_\mathrm{max}$ throughout this section.


\subsection{Validation of Our Epidemic Threshold}
\label{subsec:validation}

\begin{figure}[htbp]
\centering
\subfloat[Outbreak sizes at the end of our simulations.]{%
  \includegraphics[width=0.8\textwidth]{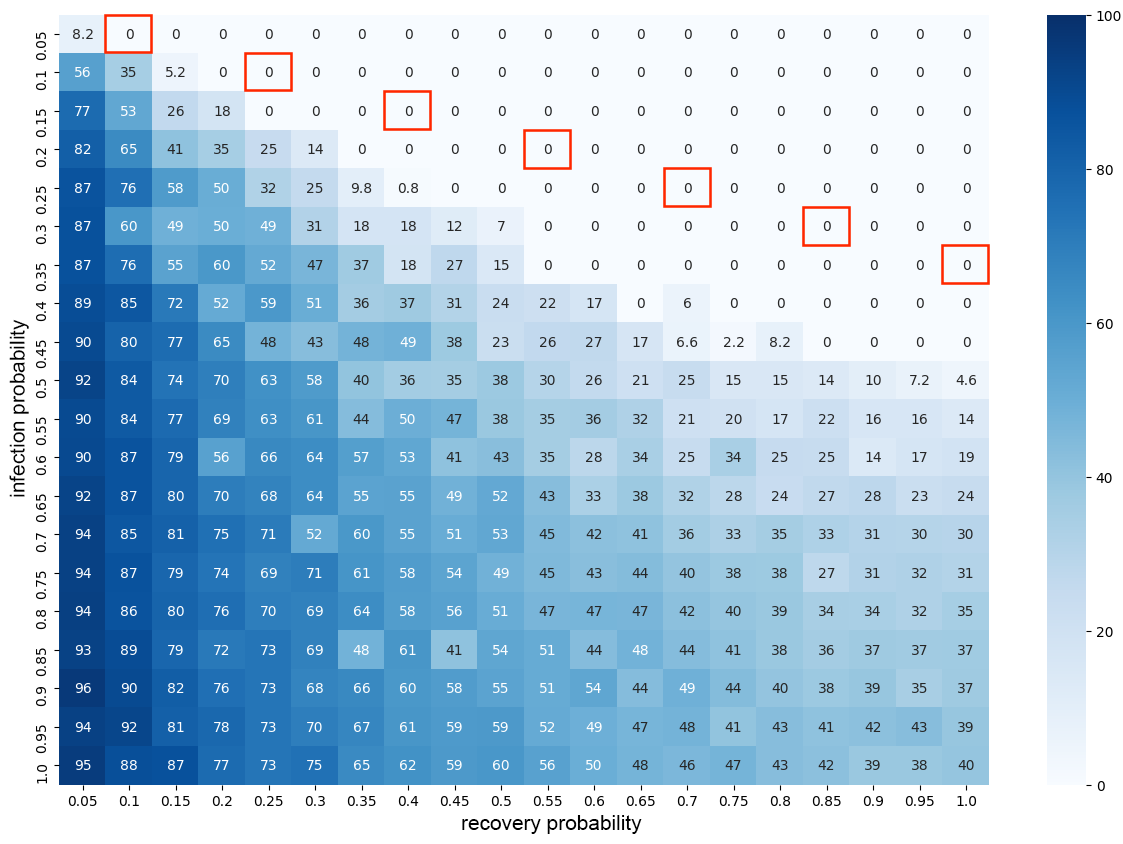}%
  \label{fig:validation-outbreak-size}
}
\qquad
\subfloat[Critical values $\rho_{\text{cr}}(\textbf{S})$ of the epidemic-threshold condition.]{%
  \includegraphics[width=0.8\textwidth]{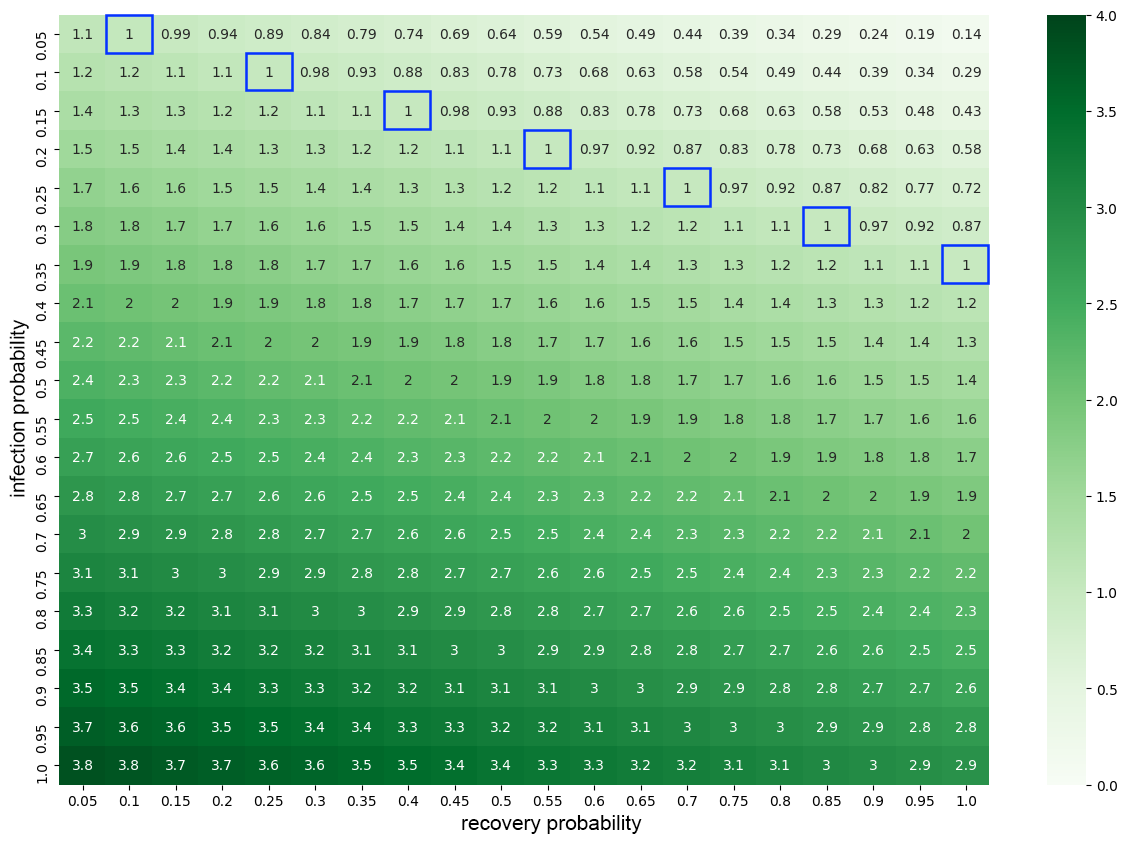}%
  \label{fig:validation-critical-value}
}
\caption{(a) The final outbreak sizes in our numerical simulations of an SIS model on a tie-decay network that we construct from an Erd\H{o}s--R\'{e}nyi (ER) network and (b) the associated critical values of the epidemic-threshold condition. (See the text for more details.) For each numerical simulation, we simulate an SIS process using each pair of infection and recovery probabilities for $10^3$ time steps. We do each simulation $10$ times, and we report the means of the final outbreak sizes.}
\label{fig:validation}
\end{figure}

To validate the epidemic-threshold condition \eqref{eqn:epidemic_threshold} in \Cref{sec:theoretical}, we construct a tie-decay network from an Erd\H{o}s--R\'{e}nyi (ER) network in the following way. {Let $G(N, p)$ be an instance of the $\mathcal{G}(N, p)$ ensemble of ER networks}, where $N$ is the number of nodes and $p$ is the probability of an edge between each pair of nodes. To create a tie-decay network, we assign a sequence $T_e = t_1, t_2, \ldots$ of time stamps to each edge $e = (i, j)$; these time stamps indicate the times of the interactions between nodes $i$ and $j$. We generate the sequences of time stamps using an exponential waiting-time distribution with scale $\beta$. That is, the difference $t_{k+1} - t_{k}$ between two consecutive event times, $t_k$ and $t_{k+1}$, has a mean of $\beta$. We suppose that our tie-decay network starts from a tie-strength matrix $\textbf{B}(0)$ with all edges of equal tie strength $0.5$. The tie strengths then evolve continuously following \eqref{eqn:continuous_evolve}. We increment the tie strength of edge $e$ at each $t_{i} \in T_e$, and the strength of a tie decays exponentially when there are no interactions. We then discretize the tie-decay network (see our discussion in \Cref{sec:model}) and simulate an SIS process with {maximum infection probability} $\lambda$ and recovery probability $\mu$ on this network. 

In our numerical experiments in \Cref{fig:validation}, we construct a tie-decay network from an instance $G^{(1)}_{\text{ER}} = G(100, 0.1)$ of an ER network {(where we ensure that $G^{(1)}_{\text{ER}}$ has a single connected component)} and a tie-decay coefficient of $\alpha = 10^{-1}$.  We generate interactions with an exponential distribution with scale $\beta = 100$, and the simulations each last $10^3$ time steps. We simulate our SIS process on this tie-decay network with various infection and recovery probabilities that each range from $0.05$ to $1$. In \Cref{fig:validation}, we show the outbreak sizes that we obtain at the end of our simulations and the associated critical value $\rho_{\text{cr}}(\textbf{S})$ of the epidemic-threshold condition \eqref{eqn:epidemic_threshold}. In our examination of different pairs of infection and recovery probabilities, we observe transitions in both the final outbreak sizes and the critical values. In \Cref{fig:validation-critical-value}, for a fixed maximum infection probability $\lambda$, we highlight the recovery probability $\mu$ for which the critical value is closest to $1$. We then highlight the same $(\lambda, \mu)$ pairs in \Cref{fig:validation-outbreak-size}. We observe for all of the $(\lambda, \mu)$ pairs that yield critical values less than $1$ that the disease always dies out by the end of our simulations. This supports our theoretical result \eqref{eqn:epidemic_threshold} that if $\rho_{\text{cr}}(\textbf{S})$ is less than $1$, then the disease-free state (in which no nodes are infected) is asymptotically stable. Because we perform our simulations on a network with finitely many nodes over finitely many time steps, there are some $(\lambda, \mu)$ pairs for which the critical values are slightly larger than $1$ but have $0$ final outbreak sizes. Typically, however, we observe that after critical values exceed $1$, a larger critical value usually corresponds to a larger final outbreak size at the end of our simulations. 

To further illustrate the correlation between the final outbreak size and the critical value of the epidemic-threshold condition, we plot their relationships as a scatter plot. In \Cref{fig:scatter-plot}, we again work with the tie-decay network that we constructed from the network $G^{(1)}_{\text{ER}}$, the decay coefficient $\alpha = 10^{-1}$, and the scale $\beta = 100$. In the scatter plot, we see that when we reach disease-free state (i.e., when the final outbreak size is $0$), most of the critical values of the epidemic-threshold condition are less than $1$, which again agrees with our theoretical results in \eqref{eqn:epidemic_threshold}. Additionally, when the final outbreak size exceeds $0$, its value appears to be positively correlated with the critical value $\rho_{\text{cr}}(\textbf{S})$ of the epidemic-threshold condition \eqref{eqn:epidemic_threshold}. 
We also study this correlation on tie-decay networks that we construct from the same network $G^{(1)}_{\text{ER}}$ with different decay coefficients ($\alpha = 10^{-1}$, $\alpha = 10^{-2}$, and $\alpha = 10^{-3}$) and scales ($\beta = 10$, $\beta = 50$, and $\beta = 100$). 
In \Cref{tbl:correlation}, we show the Pearson correlation coefficient (PCC) between the final outbreak size and the critical value for each of these tie-decay networks. All of our scenarios have a PCC of at least $0.5$, which confirms the strong positive correlation between the final outbreak size and the critical value $\rho_{\text{cr}}(\textbf{S})$.

\begin{figure}[htbp]
  \centering
  \subfloat[A scatter plot of the final outbreak size versus the critical value $\rho_{\text{cr}}(\textbf{S})$ for the tie-decay network that we construct from the ER network $G^{(1)}_{\text{ER}}$ with decay coefficient $\alpha = 10^{-1}$ and a waiting-time distribution with mean $\beta = 100$. We highlight the epidemic-threshold condition $\rho_{\text{cr}}(\textbf{S}) = 1$ using a dashed red line.]{%
  \includegraphics[width=0.45\textwidth,valign=c]{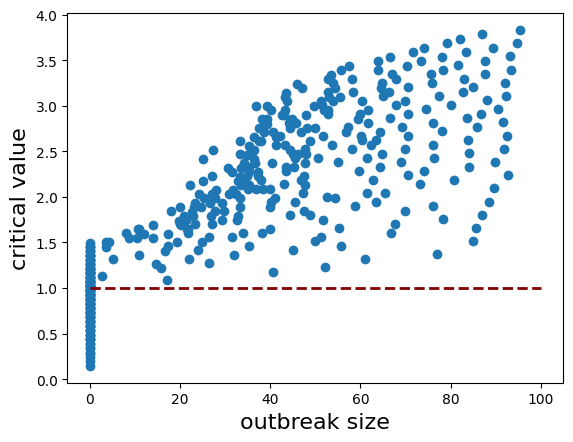}%
  \label{fig:scatter-plot}
  }
  \qquad
  \subfloat[The Pearson correlation coefficient (PCC) between the final outbreak size and the critical value. We compute the sample PCC for tie-decay networks that we construct from the ER network $G^{(1)}_{\text{ER}}$ with different values of $\alpha$ and $\beta$.]{
  \adjustbox{valign=c}
  {
  \begin{tabular}{ llr } \toprule
    $\alpha$  & $\beta$  & PCC \\ \midrule
    \multirow{3}{*}[-1ex]{$10^{-1}$} 
    & 10 &  0.606 \\
    & 50 &  0.768 \\
    & 100 & 0.825 \\
    \cmidrule(lr){1-3}
    \multirow{3}{*}[-1ex]{$10^{-2}$} 
    & 10 & 0.564 \\
    & 50 & 0.612 \\
    & 100 & 0.646 \\
    \cmidrule(lr){1-3}
    \multirow{3}{*}[-1ex]{$10^{-3}$} 
    & 10 & 0.590 \\
    & 50 & 0.601 \\
    & 100 & 0.637 \\
    \bottomrule
    \end{tabular}
    \label{tbl:correlation}
  }}
  \caption{Illustrations of the correlation between the final outbreak size and the critical value $\rho_{\text{cr}}(\textbf{S})$.
  }%
  \label{fig:table}%
\end{figure}


\subsection{Influence of Tie-Decay Networks and Their Parameters on Disease Spread}
\label{subsec:influence_factors}

We just demonstrated (see \Cref{subsec:validation}) that the final outbreak size and the critical value $\rho_{\text{cr}}(\textbf{S})$ of the epidemic-threshold condition \eqref{eqn:epidemic_threshold} has a strong, positive correlation. Therefore, we can potentially use this critical value can potentially as an indicator of the scale of the spread of a disease. Because the critical value is an important quantity, we study how different factors influence disease spread on tie-decay networks by comparing their critical values. There are three primary parameter choices that influence the spreading dynamics: (1) the tie-decay coefficient $\alpha$, which determines how fast tie strengths decay; (2) the interaction frequency between nodes, which one can tune using the scale $\beta$ of the exponential waiting-time distribution; and (3) the sparsity of the underling network, which we determine using the edge-creation probability $p$ of an ER network. We have an intuitive expectation of how each of these features influences disease dynamics. For instance, when interactions take place more frequently, one usually expects a disease to spread more easily and hence to infect more people. When a network is sparse (i.e., there are many fewer edges in it than the maximum possible number of edges), it tends to be more difficult for a disease outbreak to occur. By computing the critical values of SIS processes on different tie-decay networks, we examine if our intuition is correct.

\paragraph{Decay Coefficient.} We construct tie-decay networks using one network $G^{(2)}_{\text{ER}}$ of the $\mathcal{G}(N, p)$ ER network ensemble with $N = 100$ nodes and edge-creation probability $p = 0.05$ {(where we ensure that $G^{(2)}_{\text{ER}}$ has a single connected component)}, and we generate time stamps for each edge using an exponential waiting-time distribution with scale $\beta = 100$. We then create three variants of this tie-decay network by using decay coefficients of $\alpha = 10^{-1}$, $\alpha = 10^{-2}$, and $\alpha = 10^{-3}$. In \Cref{fig:critical-val-decay-coeff}, we compute the critical values of SIS processes with different infection rates and recovery probabilities for each of these tie-decay networks. As in \Cref{subsec:validation}, we highlight the $(\lambda, \mu)$ pairs that have critical values that are closest to $1$. This enables us to roughly divide the $(\lambda, \mu)$ parameter plane into two regions. For $(\lambda, \mu)$ pairs in the upper-right part of each plot in \Cref{fig:critical-val-decay-coeff}, the disease eventually dies out. For the rest of the $(\lambda, \mu)$ pairs, the initial infection tends to result in an outbreak. In \Cref{fig:critical-val-decay-coeff}, we observe that there are many more $(\lambda, \mu)$ pairs for which the disease dies out for $\alpha = 10^{-1}$ than for $\alpha = 10^{-2}$ and $\alpha = 10^{-3}$. For progressively smaller values of $\alpha$, it becomes more likely for an outbreak to occur. A larger decay coefficient $\alpha$ leads to stronger tie strengths in the long run. (See the discussion in \cite{ahmad2018tiedecay}.) This, in turn, makes it easier for a disease to spread because the transmission of an infection between two nodes is positively correlated with the strength of the tie between them.

\begin{figure}[htbp]
\centering
\includegraphics[width=0.8\textwidth]{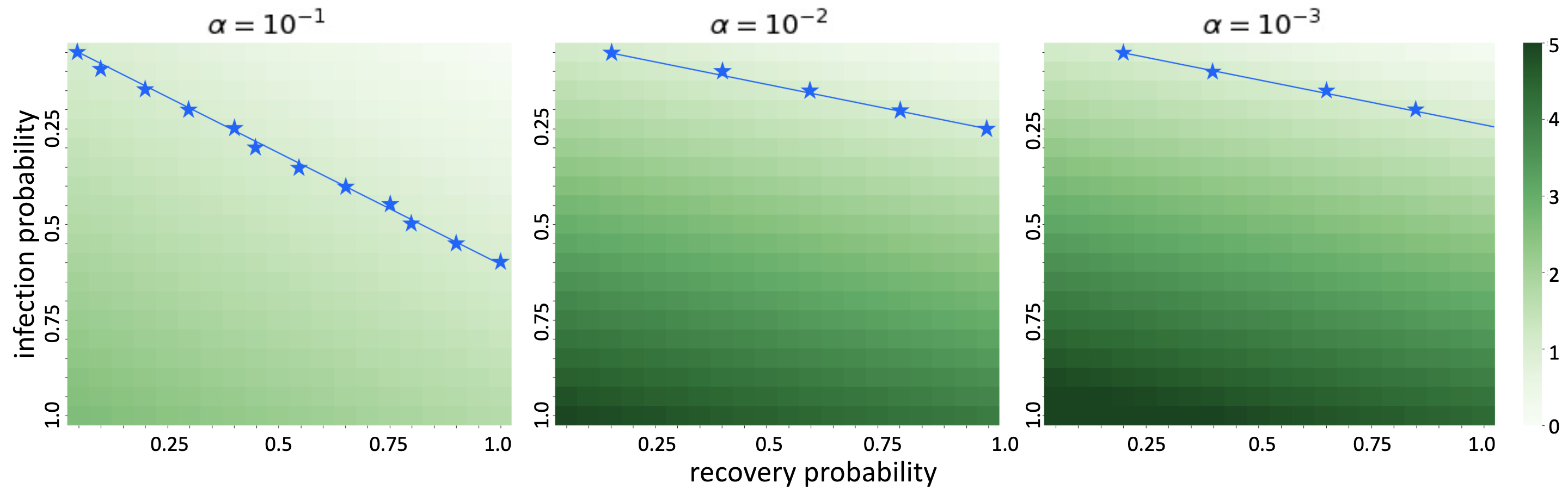}
\caption{The critical values $\rho_{\text{cr}}(\textbf{S})$ (with darker colors signifying larger values) that we compute for tie-decay networks with decay coefficients of (left) $\alpha = 10^{-1}$, (center) $\alpha = 10^{-2}$, and (right) $\alpha = 10^{-3}$ that we construct from the ER network $G^{(2)}_{\text{ER}}$. For each fixed value of the maximum infection probability $\lambda$, the star symbol indicates the smallest recovery probability $\mu$ that gives a critical value that is closest to $1$. The line signifies the rough boundary between critical values that are larger than $1$ and those that are smaller than $1$. We draw the lines manually to guide human eyes; we do not generate them using either mathematical reasoning or computations.}
\label{fig:critical-val-decay-coeff}
\end{figure}

\paragraph{Interaction frequency.} We also examine the influence of interaction frequency on disease spread in our SIS model. We construct tie-decay networks using the same ER network $G^{(2)}_{\text{ER}}$ and a decay coefficient of $\alpha = 10^{-2}$. For each edge, we generate time stamps with an exponential waiting-time distribution with scales $\beta = 10$, $\beta = 50$, and $\beta = 100$. The interactions between nodes is the most frequent when $\beta = 10$; in this case, the mean time between two consecutive interactions is $10 \Delta t$, where $\Delta t$ is the duration of a time step. In \Cref{fig:critical-val-temporal}, we observe that the dividing line for the epidemic threshold shifts gradually to the left for progressively larger values of $\beta$. We also observe this in the colors of the heat maps, for which a darker green indicates a larger critical value. For progressively larger values of $\beta$ (i.e., for decreasingly frequent interactions between nodes), the number of grids that are covered in dark green also becomes smaller. In other words, when interactions between nodes occur more frequently, it is easier for a disease to spread through a population.

\begin{figure}[htbp]
\centering
\includegraphics[width=0.8\textwidth]{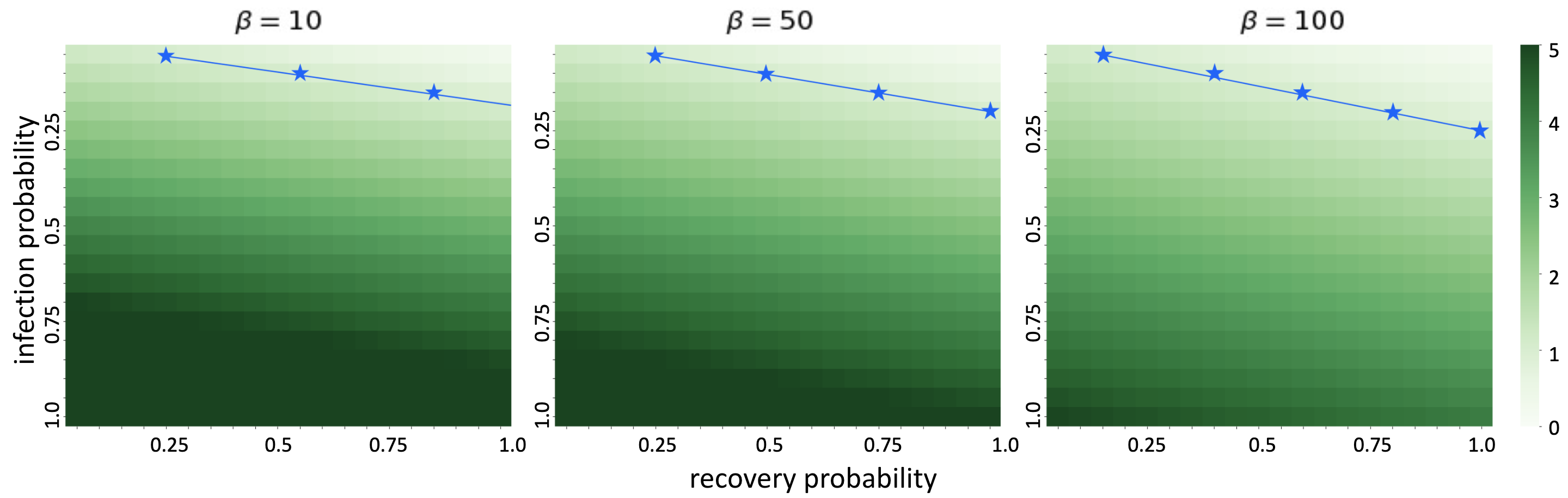}
\caption{The critical values $\rho_{\text{cr}}(\textbf{S})$ (with darker colors signifying larger values) for tie-decay networks with different interaction frequencies. We construct the tie-decay networks from the ER network $G^{(2)}_{\text{ER}}$ and generate interactions from exponential waiting-time distributions with scales (left) $\beta = 10$, (center) $\beta = 50$, and (right) $\beta = 100$. The stars and lines have the same meaning as in \Cref{fig:critical-val-decay-coeff}.}
\label{fig:critical-val-temporal}
\end{figure}

\paragraph{Sparsity of the Networks.} We construct tie-decay networks using three networks from the $\mathcal{G}(N, p)$ ER network ensemble with $N = 100$. The network $G^{(1)}_{\text{ER}}$ (which we examined previously) has an edge-creation probability of $p = 0.10$, the network $G^{(2)}_{\text{ER}}$ (which we also examined previously) has an edge-creation probability of $p = 0.05$, and the network $G^{(3)}_{\text{ER}}$ has an edge-creation probability of $p = 0.02$. We ensure that each of the three networks consists of a single connected component. We generate the time stamps for each edge from exponential distributions with decay coefficient $\alpha = 10^{-2}$ and scale $\beta = 100$. We compare the dividing line of the epidemic threshold and the colors of the heat maps in \Cref{fig:critical-val-network-sparsity}. In the densest tie-decay network (which we construct using $G^{(1)}_{\text{ER}}$), almost all $(\lambda, \mu)$ pairs lead to an eventual outbreak of the disease. By contrast, for sparser tie-decay networks, such as the one that we construct from $G^{(3)}_{\text{ER}}$, outbreaks are less likely to occur. This matches our intuition about SIS disease dynamics on tie-decay networks with different sparsities.

\begin{figure}[H]
\centering
\includegraphics[width=0.8\textwidth]{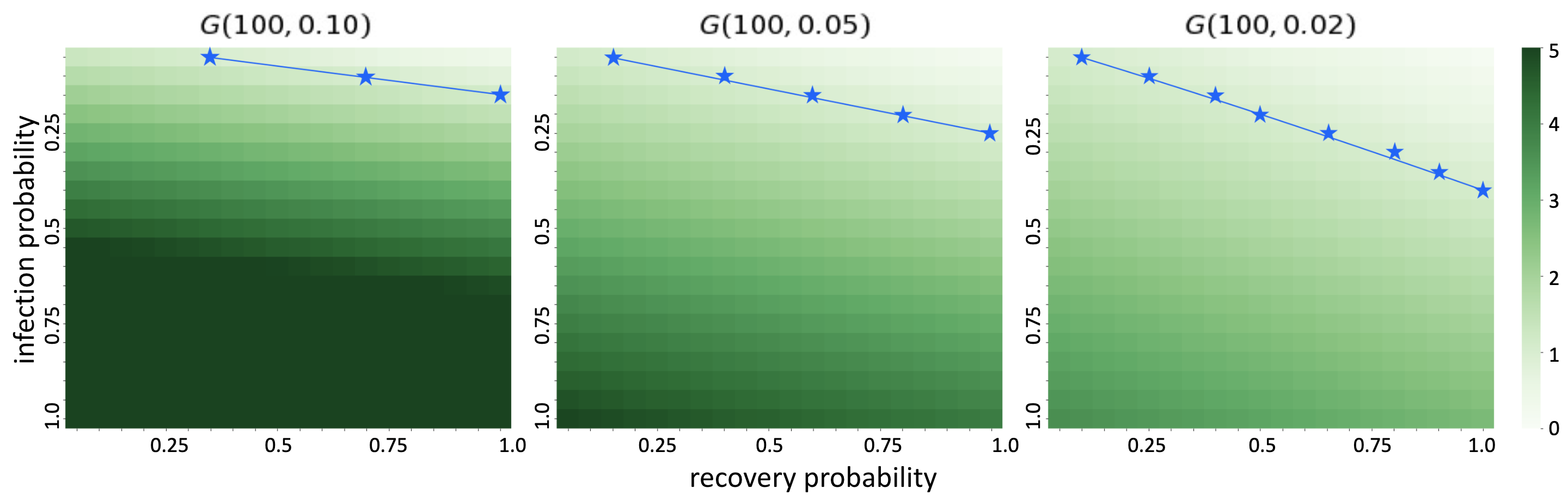}
\caption{The critical values $\rho_{\text{cr}}(\textbf{S})$ (with darker colors signifying larger values) that we compute for tie-decay networks that we construct from ER networks of different sparsities. Each of the three networks has $N = 100$ nodes; their edge-creation probabilities are $p = 0.10$, $p = 0.05$, and $p = 0.02$. The star symbols and the lines have the same meanings as in \Cref{fig:critical-val-decay-coeff}.}
\label{fig:critical-val-network-sparsity}
\end{figure}


\subsection{Choice of the Time Period for Examining the Epidemic Threshold}
\label{subsec:choice_of_period}

In \Cref{sec:theoretical}, we used the periodic boundary condition $\textbf{B}^{(\tau)} = \textbf{B}^{(\tau+l)}$, which requires the tie-strength matrix to be periodic in time with period $l$. However, for most tie-decay networks, such periodic behavior does not occur. Tie strengths increment instantaneously and decay continuously in time, so it would be very surprising for such periodicity to occur. Valdano et al. \cite{valdano2015} proposed that as long as the data-collection time period $l$ is long enough, the data gives `an approximately complete reconstruction of the temporal network properties', and once hence ought to be able to accurately estimate the epidemic threshold of a contagion model on a temporal network, even if it constructed from empirical data. In our tie-decay networks, we demonstrate using numerical computations that one can characterize the outcome of an entire SIS process by computing the epidemic threshold over a time period that is smaller than the entire time span.

We simulate two SIS processes on a tie-decay network that we construct from the ER network $G^{(2)}_{\text{ER}}$ with a decay coefficient of $\alpha = 10^{-1}$. We generate the interactions from an exponential waiting-time distribution with scale $\beta = 100$. The first SIS process has a {maximum} infection probability of $\lambda = 0.3$ and a recovery probability of $\mu = 0.7$, and the second SIS process has a {maximum} infection probability of $\lambda = 0.4$ and a recovery probability of $\mu = 0.6$. We simulate each SIS process for $10^3$ time steps. If we compute the critical threshold $\rho_{\text{cr}}(\textbf{S})$ using the period $l = 10^3$, we obtain $\rho_{\text{cr}}(\textbf{S}_\textbf{1}) \approx 0.816$ for the first SIS process and $\rho_{\text{cr}}(\textbf{S}_\textbf{2}) \approx 1.088$ for the second SIS process. In \Cref{fig:choice-of-period}, we plot the evolution of the critical values for different choices of the time period $l$. In \Cref{fig:choice-of-period-1}, we see that although the critical value starts above the threshold and changes rapidly at first, it stabilizes after a fairly small number of time steps. When $(\lambda,\mu) = (0.3,0.7)$, we observe for all $l \geq 140$ that all of the critical values lie in the interval $(0.80, 0.82)$. In \Cref{fig:choice-of-period-2}, the critical values again converge quickly after a small number of time steps. When $(\lambda,\mu) = (0.4,0.6)$, we observe for all $l \geq 143$ that all of the critical values lie in the interval $(1.07, 1.09)$. From these two examples, we see that regardless of whether the critical value $\rho_{\text{cr}}(\textbf{S})$ is above $1$ or below $1$, we are able to accurately estimate the epidemic threshold by using a period $l$ that is fairly small in comparison to the length of the entire time span. The fast convergence of the critical values is a feature of the employed tie-decay network model. Valdano et al.~\cite{valdano2015} studied the influence of the time period on estimations of the epidemic threshold of an SIS process on a multilayer representation of a temporal network for a sequence of temporal snapshots. However, in their experiments, when the time period $l$ is small in comparison to the length of the total time span, they did not always observe convergence of the critical values. The fast convergence of the critical values on our tie-decay networks enables us to estimate the outcome of an SIS process using data from only the early stages of an epidemic. Specifically, by calculating the epidemic threshold using \eqref{eqn:epidemic_threshold}, one can potentially characterize the spreading dynamics of an epidemic that lasts for several years using 
the data from the first hundred days.

\begin{figure}[htbp]
\centering
\subfloat[$\lambda = 0.3$, $\mu = 0.7$]{%
  \includegraphics[width=0.45\textwidth]{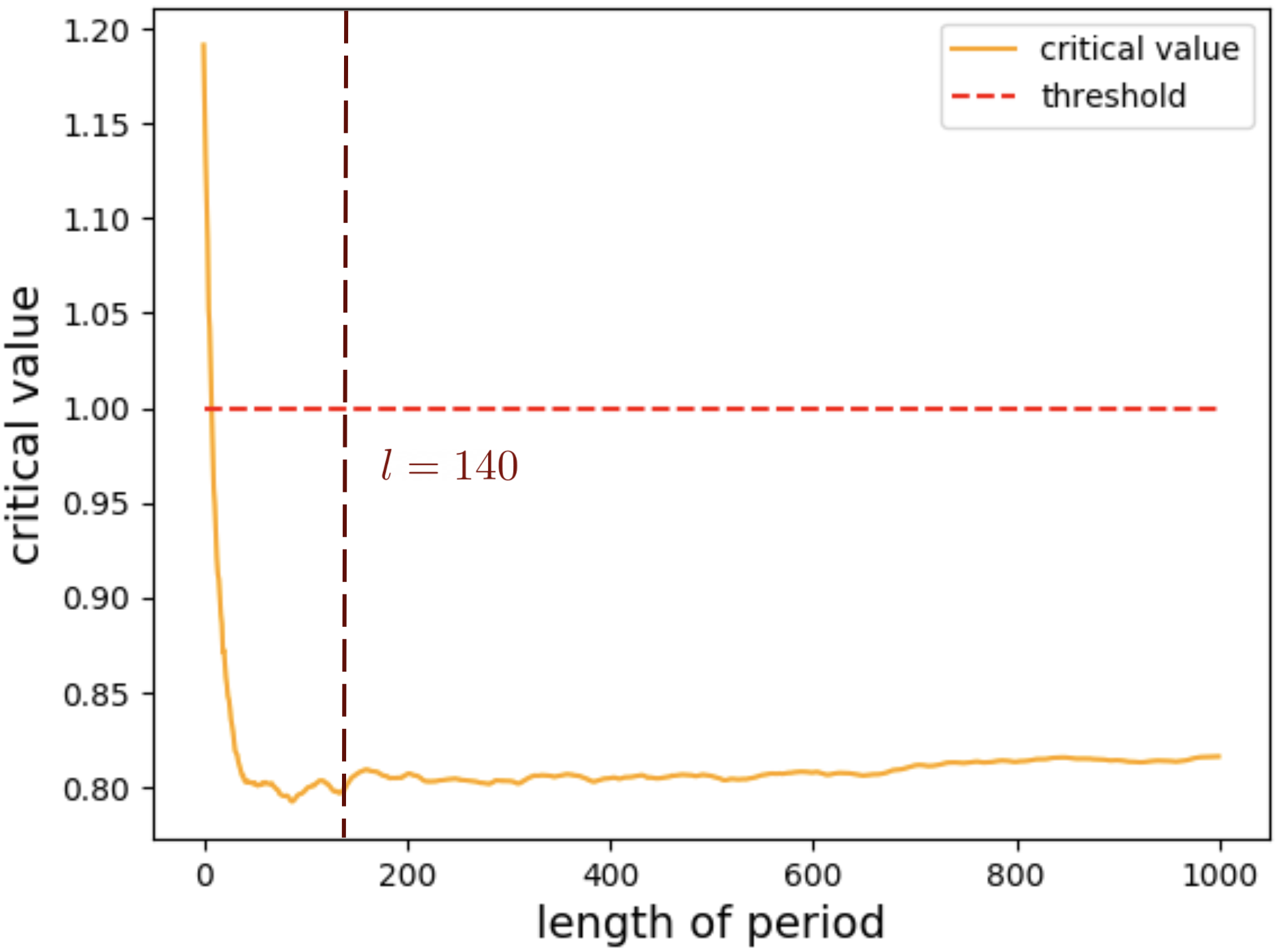}%
  \label{fig:choice-of-period-1}
}\qquad
\subfloat[$\lambda = 0.4$, $\mu = 0.6$]{%
  \includegraphics[width=0.45\textwidth]{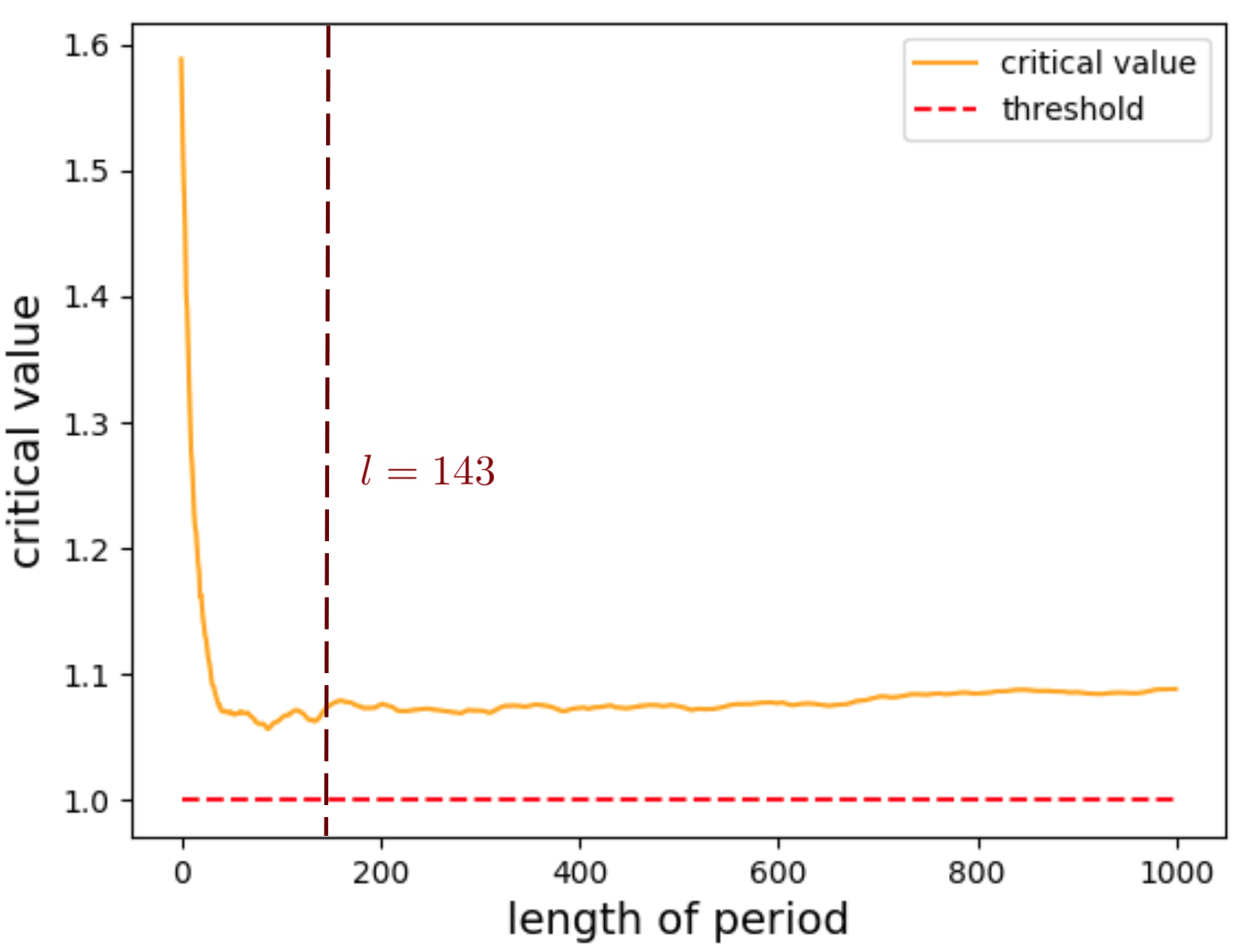}%
  \label{fig:choice-of-period-2}
}
\caption{The critical values $\rho_{\text{cr}}(\textbf{S})$ that we compute for different time periods $l$ for SIS processes. In each case, the SIS process occurs on a tie-decay network that we construct from the ER network $G^{(2)}_{\text{ER}}$ with a decay coefficient of $\alpha = 10^{-1}$ and interactions that we generate using an exponential waiting-time distribution with scale $\beta = 100$.}
\label{fig:choice-of-period}
\end{figure}

Because we compute the system matrix only for the first $l$ time steps, instead of for the entire time span, the fast convergence of critical values also allows us to save computation time when computing estimates of the critical values of the epidemic-threshold condition. Let $\rho_{\text{cr}}(\textbf{S}^{(l)})$ denote the estimated critical value that we compute when the period is $l$. For the numerical experiments in \Cref{subsec:validation} and \Cref{subsec:influence_factors}, we estimate the critical value for each period $l$ until we satisfy the following stopping criterion: $\| \max_{k \in \{l-9, \ldots, l\}}\rho_{\text{cr}}(\textbf{S}^{(k)}) - \min_{k \in \{ l-9, \ldots, l\}}\rho_{\text{cr}}(\textbf{S}^{(k)})\| \leq 0.02$. We usually finish this computation of critical values within about 100--200 time steps, which is much smaller than the $10^3$ time steps when we conduct numerical simulations of SIS dynamics over the entire time span.


\subsection{Experiments on Real-World Examples}
\label{subsec:real-world}

We now construct tie-decay networks using data from real-world examples and explore the dynamics of our SIS model on these networks. 
We consider two real-world data sets: (1) a workplace network of interactions between individuals in an office building in France between 24 June and 3 July in 2013 \cite{Genois2015} and (2) a conference network of face-to-face contacts over 2.5 days between conference attendees during the ACM Hypertext 2009 conference \cite{Isella2011}. For each data set, we use the time stamps of the interactions between people when we construct its tie-decay network. Given a data set, we initialize the state of a tie-decay network as follows: (1) we use the nodes that are present in the data set; (2) an edge exists between each pair of distinct nodes with an independent, homogeneous probability of $0.1$ (i.e., we create a network from the ensemble $\mathcal{G}(N, 0.1)$, where $N$ is the number of nodes), and we assign an initial tie strength of $0.5$ to each edge that exists. For each data set, we consider only a single initial network. We use the time stamps from the empirical data for the interactions and hence to determine the evolution of the tie strengths. The ties decay exponentially with a decay coefficient of $\alpha = 10^{-2}$, and we increment the tie strengths whenever an interaction takes place. As before, we validate our theoretical results using numerical simulations of SIS dynamics, and we also examine the influence of different choices of the time period $l$ on our computational estimates of the epidemic thresholds. 

In \Cref{fig:workplace-validation} and \Cref{fig:ht09-validation}, we compare the final outbreak sizes and estimated critical values in the workplace network and the conference network, respectively. These two real-world examples are both fairly small; the workplace network has 93 nodes, and the conference network has 113 nodes. The interactions in the workplace network have a roughly periodic pattern, with individuals interacting more frequently during work hours than during other hours. The conference network (which also was used in Valdano et al.\cite{valdano2015} to validate their epidemic threshold) has a different interaction pattern---for example, some individuals are in sequences of interactions during a short period of time, but then have few or no further interactions---than the workplace network because of the nature of a scientific conference. Despite the differences between the two real-world examples, we observe a strong correlation between the estimated critical values and the final outbreak sizes in both of them. Although the epidemic-threshold condition \eqref{eqn:epidemic_threshold} does not explicitly state that a larger critical value corresponds to a larger number of nodes in the infected state at $t = T$ (when we finish our simulations), this positive correlation tends to hold for both real-world networks. As in \Cref{subsec:validation}, we highlight the $(\lambda, \mu)$ pairs that yield critical values that are closest to $1$ and we indicate their corresponding outbreak sizes. In both real-world examples, whenever the critical values fall below $1$, the disease dies out at the end of a simulation. This supports our theoretical formulation of the epidemic-threshold condition in \eqref{eqn:epidemic_threshold}.

\begin{figure}[htbp]
\centering
\subfloat[Outbreak sizes at the end of the simulations.]{%
  \includegraphics[width=0.8\textwidth]{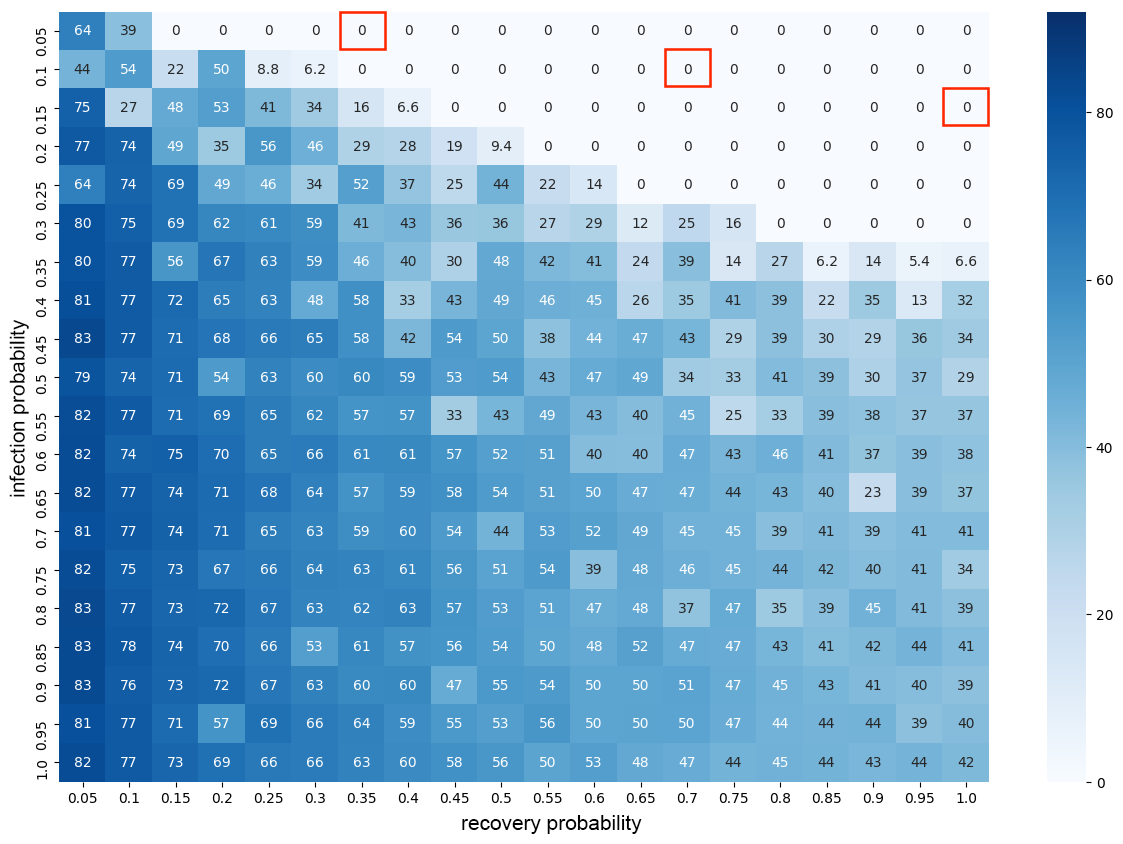}%
  \label{fig:workplace-validation-outbreak-size}
}\qquad
\subfloat[Critical values $\rho_{\text{cr}}(\textbf{S})$ of the epidemic-threshold condition.]{%
  \includegraphics[width=0.8\textwidth]{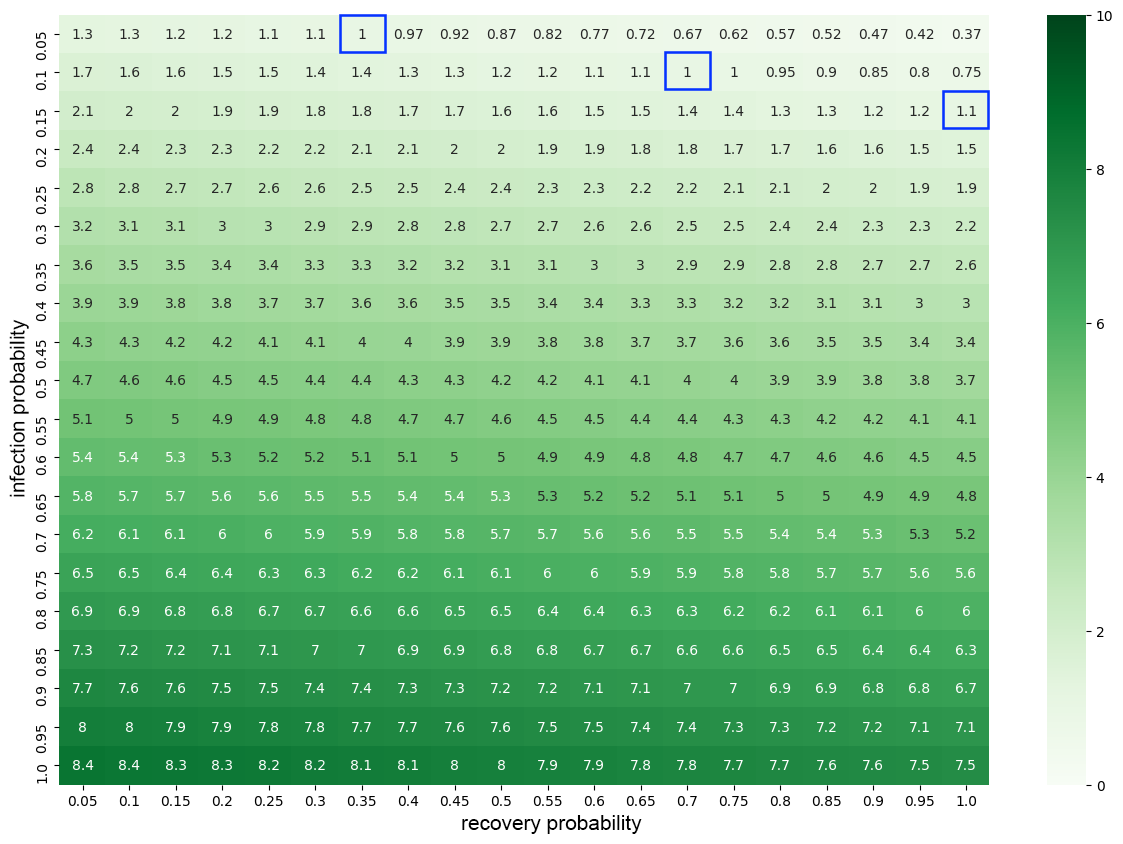}%
  \label{fig:workplace-validation-critical-value}
}
\caption{(a) Final outbreak sizes of our SIS process and (b) the associated estimated critical values of the epidemic-threshold condition for the workplace tie-decay network. We simulate the SIS dynamics for 988 time steps (where one time step consists of 1000 seconds) after discretization, and we estimate the critical values using a time period of length $l = 100$.
}
\label{fig:workplace-validation}
\end{figure}

\begin{figure}[htbp]
\centering
\subfloat[Outbreak sizes at the end of the simulations.]{%
  \includegraphics[width=0.8\textwidth]{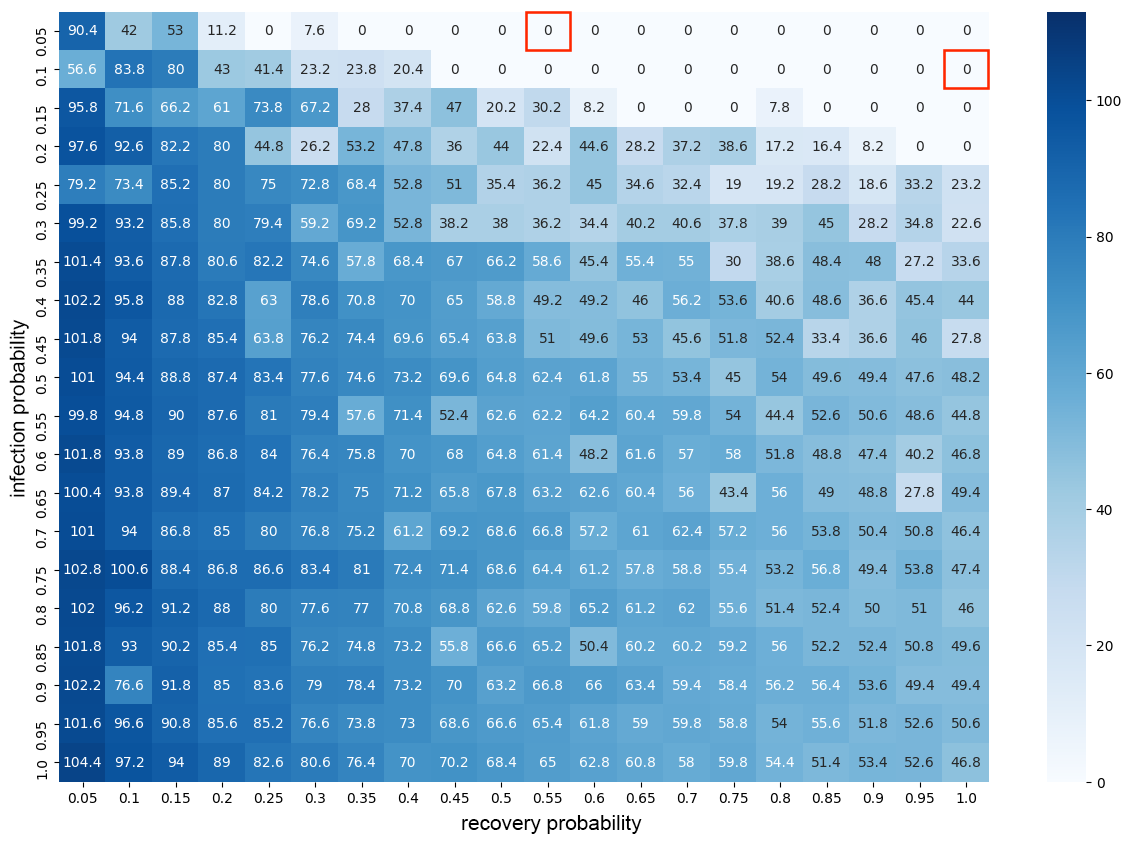}%
  \label{fig:ht09-validation-outbreak-size}
}\qquad
\subfloat[Critical values $\rho_{\text{cr}}(\textbf{S})$ of the epidemic-threshold condition.]{%
  \includegraphics[width=0.8\textwidth]{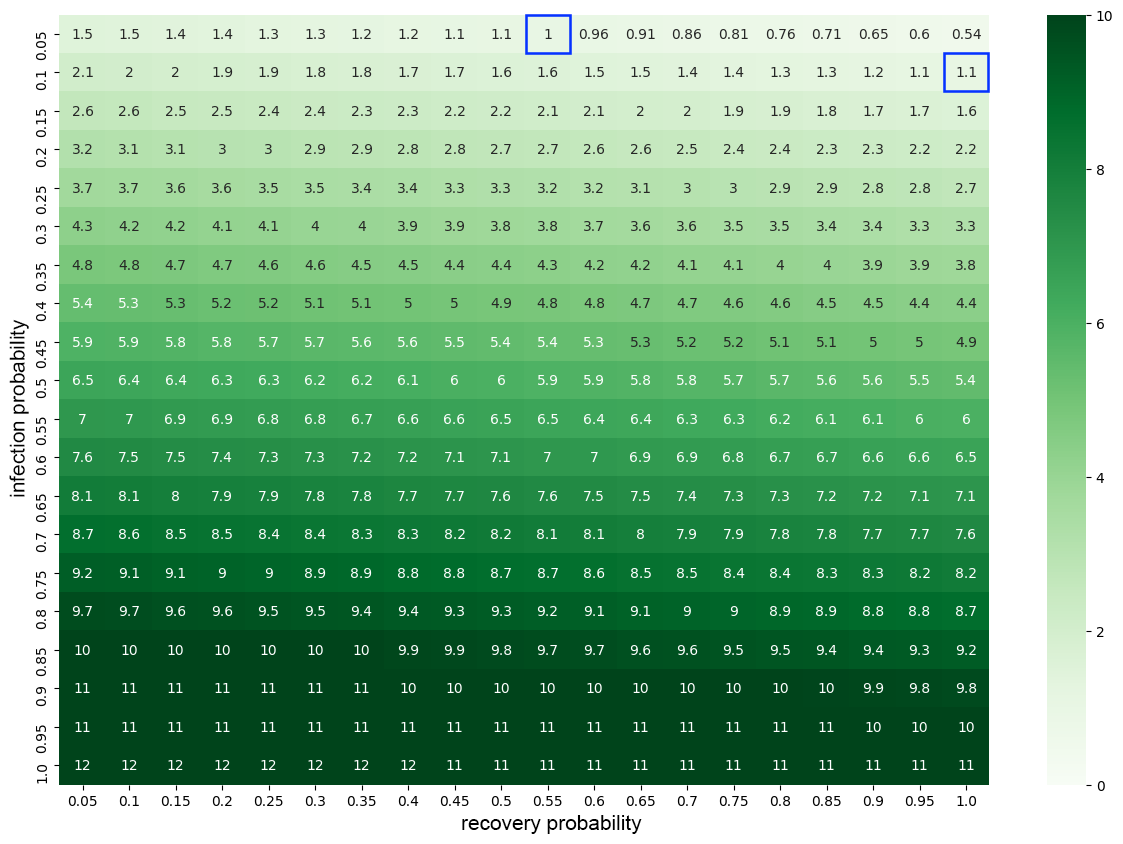}%
  \label{fig:ht09-validation-critical-value}
}
\caption{(a) Final outbreak sizes at the end of simulations of our SIS process and (b) the associated critical values of the epidemic-threshold condition for the conference tie-decay network. We simulate the SIS dynamics for 1,062 time steps (where one time step consists of 200 seconds) after discretization, and we estimate the critical values using a time period of length $l = 100$.}
\label{fig:ht09-validation}
\end{figure}

As we discussed in \Cref{sec:model}, to discretize time in the real-world networks, we choose a sufficiently small $\Delta t$ so that there are not too many interactions in the time interval $((\tau -1)\Delta t, \tau \Delta t]$. Specifically, for each of our real-world networks, we choose $\Delta t$ to ensure that the number of interactions in each time interval is no more than $10$. In the workplace data set, interactions were recorded every 20 seconds, and we take $\Delta t$ to be 1000 seconds in our discretization. Interactions were {also} recorded every 20 seconds in the conference data set, but this time we take $\Delta t$ to be 200 seconds in our discretization. We define one time step $\Delta t$ differently in the two data sets because of distinct features that we observe in their contact patterns. In the workplace network, there are many time intervals without any interactions, so we use a coarse discretization to ensure that the evolution of tie strengths is meaningful. In the conference network, interactions are more frequent, so we use a finer discretization. After discretization, each of the real-world networks has about 1,000 time steps in total. For each network, we estimate the epidemic threshold using approximately one tenth of the entire time span; specifically, we use a time period of $l = 100$. For the workplace network, this choice entails examining the critical value of the epidemic-threshold condition using all contacts from the first day; for the conference network, we use all contacts from the first 5.5 hours. Our discussion in \Cref{subsec:choice_of_period} suggests that data from the early stages of these temporal networks is sufficiently representative of the entire data set to allow us to successfully estimate the epidemic thresholds for the entire time span. Furthermore, for the workplace network, it is reasonable to assume that the contact patterns of the workers are somewhat periodic, with similar patterns during each work day. The contact patterns in the conference network also appear to have a somewhat periodic pattern, as there is a spike in the number of contacts approximately every 6 hours. In summary, for both networks, using the period $l = 100$ seems to give a good estimate of the epidemic threshold, as we observe a close relationship between the magnitudes of the critical values and the final outbreak sizes with this choice. 

Our accurate estimations of critical values of the epidemic-threshold condition using only early times in disease dynamics suggests the possibility of control measures for slowing down the spread of a disease. For instance, government regulations such as rules for physical distancing (which is also called ``social distancing'') can decrease the interaction frequencies of social contacts. As we saw in \Cref{fig:critical-val-temporal}, as we lower the interaction frequency $\beta$, the dividing line for the epidemic threshold shifts to the left. Therefore, for fixed infection and recovery probabilities, when the interaction frequency is sufficiently small, the critical value can become smaller than $1$, so a disease outbreak is unlikely. Additionally, the use of personal protective equipment (PPE) like masks can reduce infection probabilities, thereby also leading to a decrease of the critical value of the epidemic threshold. 
 

\subsection{Comparison with SIS Dynamics on a Traditional Temporal Network}
\label{subsec:comparison_traditional}

To highlight how features of tie-decay networks assist in the forecasting of epidemic outbreaks, we compare the epidemic thresholds that we obtain using a tie-decay network with ones that we obtain using a traditional temporal network that we construct from binning interactions in a time window. We also illustrate some challenges that arise if one simulates a model of disease spread on a network that aggregates all of the interactions into adjacent time windows of length $w$. This further motivates the use of tie-decay networks for studying spreading behavior on temporal networks.

To construct a traditional temporal network by binning interactions, we work with the ER network $G^{(1)}_{\text{ER}}$ and the same sequence of interactions (with time stamps $T_e = t_1, t_2, \ldots$ for each edge $e = (i, j)$) that we used in \Cref{subsec:validation}. We build a traditional temporal network using adjacent windows of length $w = 10$ \cite{Braha2009}. That is, we first divide the time span into adjacent, disjoint time windows $(10(k-1), 10k]$ and we then aggregate all of the interactions within each window. Let $\textbf{A}'_k$ denote the adjacency matrix of the $k$th window $(10(k-1), 10k]$. To make sure that it is reasonable to compare our results from using tie-decay networks with those from using traditional temporal networks, we also rescale the tie strengths of $\textbf{A}'_k$ such that their sum is equal to the time-averaged sum of tie strengths $\textbf{B}(t)$ within the $k$th time window. We then simulate an SIS process with a {maximum} infection probability of $\lambda$ and a recovery probability of $\mu$ on the traditional temporal network that consists of the sequence $\{\textbf{A}'_1, \textbf{A}'_2, \ldots \}$ of adjacency matrices. Within the $k$th window $(10(k-1), 10k]$, we simulate the SIS process for 10 steps; within this window, the tie strengths are constant and given by $\textbf{A}'_{k-1}$. Using methods that were designed for discrete temporal networks \cite{aditya2010,valdano2015}, we derive the epidemic-threshold condition for the traditional temporal network to be $\rho_{\text{cr}}(\textbf{S}') = 1$, where $\textbf{S}' \triangleq \prod_{k} \left[(1-\mu)\textbf{I} + \lambda_\text{max}\min\{\textbf{A}'_k , 1\}\right]$ is the system matrix that is associated with the traditional temporal network and $\rho_{\text{cr}}(\mathbf{\Theta})$ denotes the spectral radius of the matrix $\mathbf{\Theta}$. As in our terminology for tie-decay networks, we refer to $\rho_{\text{cr}}(\textbf{S}')$ as the ``critical value'' of the traditional temporal network.

In our comparison, we simulate an SIS process with different maximum infection probabilities $\lambda$ and a fixed recovery probability of $\mu = 0.5$ on the tie-decay network (see  \Cref{subsec:validation} for details of its properties) and the traditional temporal network that we construct from $G^{(1)}_{\text{ER}}$ and the same sequence of interactions. In \Cref{fig:comparison}, we plot their critical values and final outbreak sizes versus the recovery probability. One major difference between the dynamics on the two types of networks is the magnitudes of their critical values. The critical values $\rho_{\text{cr}}(\textbf{S})$ for the tie-decay network range from $0.64$ to $3.40$, whereas the critical values $\rho_{\text{cr}}(\textbf{S}')$ for the traditional temporal network range from $0.94$ to $1.06$ and remain close to $1$. The proximity of $\rho_{\text{cr}}(\textbf{S}')$ to the threshold value $1$ poses two challenges. First, although theoretical results \cite{aditya2010} suggest that, as time $t \rightarrow \infty$, one can successfully predict whether or not an outbreak will take place based on the epidemic-threshold condition, it may be difficult to obtain an accurate prediction in networks with finitely many nodes that one examines for only a finite amount of time. When the critical value is close to $1$, even a very small perturbation can change whether or not an epidemic-threshold condition is satisfied. In \Cref{fig:comparison}, the critical value of the traditional temporal network exceeds $1$ for $\lambda \geq 0.4$, but we observe outbreaks only for $\lambda \geq 0.7$. The second challenge is that the proximity of $\rho_{\text{cr}}(\textbf{S}')$ to $1$ also makes it difficult to discern the extent to which the critical values correlate with the final outbreak sizes. In \Cref{subsec:validation}, we examined the positive correlation between the critical value and final outbreak size for an SIS process on a tie-decay network. However, one can see in \Cref{fig:comparison} that such a correlation is less evident for an SIS process on a traditional temporal network.

\begin{figure}[htbp]
\centering
\includegraphics[width=0.9\textwidth]
{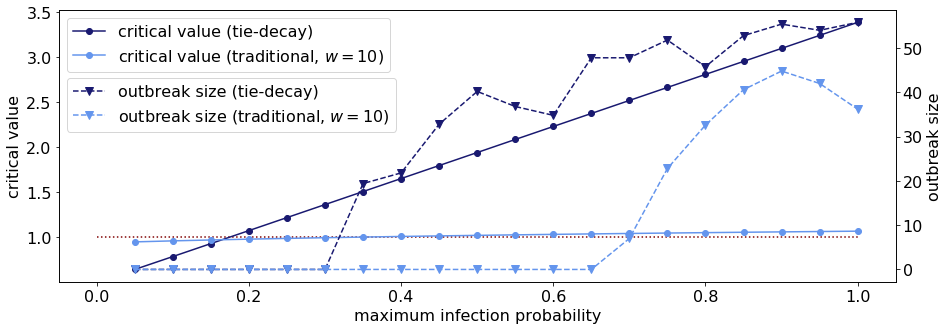}%
\caption{The critical values and final outbreak sizes that we obtain from simulating an SIS process on a tie-decay network and on a traditional temporal network with a time-window length $w=10$. We build the two networks using the same underlying ER network. (See the text for details.) The SIS process has a maximum infection probability of $\lambda$ and a fixed recovery probability of $\mu = 0.5$. In each network, we simulate the SIS process for $10^3$ time steps. We repeat each simulation $10$ times and report the means of the final outbreak sizes and critical values. We color the critical values and the final outbreak sizes for the tie-decay network in dark blue and those for the traditional temporal network in light blue. We mark the critical values with discs and the final outbreak sizes with triangles. The dotted red line marks the threshold value $1$.}
\label{fig:comparison}
\end{figure}

Another difficulty in constructing a traditional temporal network is determining an appropriate time-window length $w$ (or multiple such lengths, if one allows them to be nonuniform) \cite{psorakis2013probabilistic}. When one has prior knowledge of seasonality (or other regularity, such as periodicity) in data, it can be worthwhile to use a traditional network that is divided into a sequence of time windows. However, in many applications, such prior knowledge is typically not available. Modeling the spread of an infectious disease on a tie-decay network does not require tuning a time-window length, and it is thus worthwhile to investigate disease dynamics on tie-decay networks.


\section{Conclusions and Discussion}
\label{sec:conclusion}

We studied the epidemic threshold of an SIS process on tie-decay networks, which model relationships between nodes in a way that distinguishes between tie strengths and interactions between the nodes. In these tie-decay networks, the tie strengths increase instantaneously when there is an interaction and decay continuously in time between interactions. We demonstrated how to mathematically formulate an SIS process on a tie-decay network and then derived the epidemic threshold of this process by extending methods that were designed for networks that consist of sequences of temporal snapshots. Based on our theoretical results, we performed numerical simulations on both synthetic and real-world networks to obtain several insights into SIS dynamics on tie-decay networks. First, we showed numerically that the epidemic-threshold condition is successful at estimating the final outbreak sizes in the numerical simulations. We also showed that the critical value of the epidemic threshold is positively correlated with the final outbreak size of a disease. Our numerical experiments on synthetic networks illustrated how various factors---the decay coefficient of the tie strengths, the interaction frequency between nodes, and the sparsity of a network---impact the spread of a disease on a tie-decay network. Our numerical experiments on the length of the time period over which we computationally estimate the epidemic threshold demonstrated the possibility of estimating the critical values of disease dynamics using data from the early stages of disease spread. Finally, we demonstrated that one can estimate the epidemic threshold successfully in tie-decay networks that one constructs from real-world contact data.

There are a variety of interesting ways to build on our work. When deriving the epidemic threshold of our SIS model on a tie-decay network, we first discretized the network using a sufficiently small time step and we then applied methods that were designed for discrete-time temporal networks. It is also important to extend approaches for deriving epidemic thresholds that were designed for continuous-time temporal networks (see \cite{Speidel2017, valdano2018continuous}). Although the existing approaches to do this do not appear to be immediately applicable to tie-decay networks (because one cannot necessarily assume that the adjacency matrix at any time $t$ commutes with the aggregated matrix up to time $t$ due to the particular structure of tie-decay networks), it should be possible to modify them to incorporate the features of tie-decay networks. Another worthwhile research direction is to study epidemic thresholds in more complicated epidemic models, such as SEIR processes (and models of disease spread with many more compartments), on tie-decay networks. It is valuable to examine new approaches on simplistic models such as SIS processes and SIR processes, but realistic models of disease dynamics are typically more complicated \cite{Brauer2019}. When studying such models, it will be especially interesting to examine whether or not it is still possible to accurately estimate critical values of disease dynamics at early stages of disease spread. It is also relevant to compare disease dynamics on tie-decay networks to disease dynamics in different types of continuous-time network models (such as Hawkes processes \cite{laub2015,zipkin2016}) that integrate a point process with a network of interacting entities. Because of the self-exciting properties of a Hawkes process, it produces interactions that cluster in time. Prior studies have illustrated that such burstiness in contact patterns impacts epidemic-threshold conditions \cite{Zino2018}, so it will be interesting to investigate how to incorporate such point-process models into a tie-decay framework. Researchers continue to develop new types of temporal networks, and it is important to compare disease dynamics on tie-decay networks to such dynamics on these temporal networks. For example, as in the tie-decay networks that we employed, Gelardi et al.~\cite{gelardi2021temporal} recently examined temporal network data in the form of evolving weighted networks with edge weights that update from each interaction. However, unlike in our tie-decay networks, they took interconnections between social relationships into account. For example, an interaction between two individuals may simultaneously strengthen their relationship with each other while weakening their relationships with other individuals. It is important to explore how such interdependencies affect disease dynamics and other spreading processes.


\section*{Acknowledgements}

We thank Eugenio Valdano for helpful discussions.





\end{document}